
\documentclass[12pt]{article}

\usepackage{amsmath, amssymb, amsfonts, amsthm}
\usepackage{graphicx}
\usepackage{natbib}
\usepackage{indentfirst}
\usepackage{times}
\usepackage{bbm}
\usepackage{lineno}
\usepackage{authblk}
\usepackage{bibunits}


\topmargin 0.0cm
\oddsidemargin 0.2cm
\textwidth 16cm 
\textheight 21cm
\footskip 1.0cm

\usepackage[colorlinks, linkcolor=black, citecolor=black,filecolor=black, urlcolor=black]{hyperref}


\newtheorem{theorem}{Theorem}
\newtheorem{lemma}[theorem]{Lemma}
\newtheorem{proposition}[theorem]{Proposition}
\newtheorem{corollary}[theorem]{Corollary}

\begin{document}

\title{The inevitability of unconditionally deleterious substitutions during adaptation}

\author[1]{David M.~McCandlish\thanks{To whom correspondence should be addressed. E-mail: \texttt{davidmc@sas.upenn.edu}. }}
\author[2]{Charles L.~Epstein}
\author[1]{Joshua B.~Plotkin}
\affil[1]{Department of Biology, University of Pennsylvania, Philadelphia, PA}
\affil[2]{Department of Mathematics, University of Pennsylvania, Philadelphia, PA}

\date{}

\maketitle

\begin{abstract}  \normalsize 
\noindent 
Studies on the genetics of adaptation typically neglect the possibility
that a deleterious mutation might fix.  Nonetheless, here we show that, in many
regimes, the first substitution is most often deleterious, even when fitness
is expected to increase in the long term.  In particular, we prove that this
phenomenon occurs under weak mutation for any house-of-cards model with an
equilibrium distribution. We find that the same qualitative results hold under Fisher's
geometric model.  We also provide a simple intuition for the surprising
prevalence of unconditionally deleterious substitutions during early adaptation.
Importantly, the phenomenon we describe occurs on fitness landscapes without any local maxima 
and is therefore distinct from ``valley-crossing".
Our results imply that the common practice of ignoring deleterious substitutions
leads to qualitatively incorrect predictions in many regimes. Our results also
have implications for the substitution process at equilibrium and for the
response to a sudden decrease in population size.
\end{abstract}

\newpage

\setlength{\parindent}{.62cm}
\setlength{\parskip}{2ex plus0.5ex minus0.2ex}
\setlength{\abovedisplayskip}{4ex}
\setlength{\belowdisplayskip}{4ex}

\begin{bibunit}[evolution]

\section*{Introduction} Organisms found in nature appear to be exquisitely adapted
to their environments -- so much so that one might think adaptive changes dominate
the process of evolution. However, a more careful examination reveals that
adaptive substitutions cannot be the whole story. First, the genomes of organisms
are filled with elements believed to be deleterious~\citep{Lynch07c}. Second,
genetic drift is known to permit the fixation of both neutral and mildly
deleterious mutations, as is commonly observed in experimental populations
\citep{Halligan09}.  Therefore, a fundamental question for evolutionary biology is
to determine under what conditions a population's fitness will tend to increase as
opposed to decrease.

In this paper we consider this question in a simple case. We suppose that a
population is evolving in the regime of ``weak mutation", so that each new
mutation is either lost or goes to fixation before the next new mutation enters
the
population~\citep[e.g.,][]{Gillespie83,Iwasa88,Sella05,Berg04,McCandlish11,McCandlish13b}.
We also assume the ``house of cards" model, so that the fitness of each new mutant
is drawn independently from a constant
distribution~\citep{Kingman77,Kingman78,Gillespie84landscape,Kauffman87,Ohta90,Flyvbjerg92,Orr02,Jain07,Joyce08,Park08}.
We then ask whether the first mutation that fixes in the population is more likely to
increase or decrease fitness.

The answer to this question is surprising. Even when a population is
destined to adapt towards higher fitness over the long term, the first
mutation to fix will often decrease fitness. We
quantify the effects of the first substitution in two ways. First, we study the
expected selection coefficient of the first substitution. If the expected
selection coefficient is positive, then the fitness of the population is expected
to increase in the short term.  Second, we study the probability that the first
substitution is advantageous. If this probability exceeds one-half, then the
first substitution will be advantageous the majority of the time. Our main result
is a mathematical theorem that characterizes the set of circumstances under which
fitness tends to initially increase or decrease. 

In particular, we show that for essentially any distribution of mutational effects
there exists a range of initial fitnesses such that the expected selection
coefficient of the first substitution is negative. On its own, this result is not
surprising.  After all, if a population starts at the highest
possible fitness then it has nowhere to go but down. What is more surprising is
that this range of initial fitnesses always includes the equilibrium mean
fitness, as well as fitnesses smaller than the equilibrium mean.  In other words,
for many populations undergoing adaptation, fitness is expected to decrease in the
short term even though the expected fitness must eventually increase to its
equilibrium value in the long term.  Likewise, we show that there is a range of initial
fitnesses, including values smaller than the equilibrium median fitness, such that
the first substitution will be deleterious a majority of the time.

Our results on the predominance of downhill steps, even during adaptation, may
sound surprising. But there is a simple, underlying reason why these phenomena
occur.  Even though each individual deleterious mutation is extremely unlikely to
fix, as the population increases in fitness and approaches equilibrium there is an
increasing supply of deleterious mutations~\citep[cf.][]{Hartl98,Silander07}. As a
result, in total there is a substantial chance  -- in fact, often a chance greater
than 50\% -- that one of these deleterious mutations will be the next to fix.  At a
mathematical level, the phenomena we discuss are consequences of a deeper result,
which we also prove, that the fitness achieved after a single substitution is
always probabilistically less than a fitness drawn from the equilibrium
distribution. 

Our results have several counter-intuitive consequences. As already mentioned, they imply that
evolution can be dominated by unconditionally deleterious
substitutions in the short term even if a population is expected to substantially
increase in fitness in the long term. To put this another way, the ``fitness
trajectory''~\citep[i.e.~the expected fitness of the population viewed as a
function of time,][]{Kryazhimskiy09} will often be non-monotonic. There is another
apparently paradoxical consequence of our results: if one begins observing a
well-adapted (i.e.~equilibrial) population at a random time, the next substitution
is more likely to be deleterious than advantageous even though in the
long term the 
frequency of deleterious and advantageous substitutions
must be exactly equal~\citep[see, e.g.][]{Tachida91,Sella05}.

We test the generality of our results by investigating another model commonly used
to study adaptation, Fisher's Geometric Model~\citep{Fisher30,
Kimura84,Hartl96,Hartl98,Orr98,Waxman98,Poon00,Martin06}, which assigns fitnesses based on an
$n$-dimensional continuous phenotype. We find that our results hold in
this case as well, in most regimes. Furthermore we demonstrate the existence of a
``cost of complexity''~\citep{Orr00} where increasing the dimensionality of the
phenotypic space increases the probability that the first substitution is
deleterious.

Our results are important because they challenge two standard ways of thinking
about evolution. First, a very large literature on the genetics of
adaptation~\citep{Orr05} focuses on quantities such as the distribution of
selection coefficients fixed in a sequence of adaptive
substitutions~\citep{Orr98,Orr02,Joyce08}, the number of substitutions that occur
before a population arrives at a local
optimum~\citep{Gillespie84landscape,Jain11}, and the rate of
adaptation~\citep{Orr00,Welch03,Martin06}. Although most studies in this literature
declare by fiat that deleterious fixations cannot occur~\citep[e.g., by using $2s$
as the probability of fixation,][]{Haldane27}, our work shows that a general
theory of adaptation must accommodate deleterious substitutions to achieve 
predictions that are even qualitatively correct.

Second, there is a persistent intuition in the literature that fitness is expected
to increase when a population is below the equilibrium mean fitness and decrease
when a population is above the equilibrium mean fitness.  According to this
intuition, the equilibrium mean fitness is precisely that fitness for which the
expected fitness change is equal to zero. Our results show that this intuition is
false: in fact, fitness is expected to decrease when a population starts at its
equilibrium mean fitness. The standard intuition fails because it erroneously
treats the approach to equilibrium as a deterministic process around the
equilibrium mean, whereas in fact a stochastic treatment is required.

The remainder of our paper is organized as follows. We first explain our
mathematical framework for evolution under weak mutation, and we present our main
results for the house-of-cards model.  We then illustrate our results by
considering a well-studied case where the fitness distribution is
Gaussian~\citep{Tachida91,Tachida96,Gillespie94b}. We follow this by
analyzing Fisher's geometric model, in which we observe and analytically
quantify the same qualitative results found in the house-of-cards model.  We
conclude by discussing the significance of our results in the context of the 
broader literature on adaptation.

\section*{Methods}

\subsection*{Evolutionary dynamics}

We consider a haploid population of size $N$ evolving in the limit of weak
mutation, so that each new mutation either goes to fixation or is lost from the
population before the next new mutation enters the population. In this limiting
regime we can neglect periods of polymorphism and simply model the population as
monomorphic, jumping from one genotype to another at each fixation event. We
assume that new mutations enter the population as a Poisson process with rate $N$.
Thus, time is measured in terms of the expected number of substitutions that would
have accumulated in the population if all mutations were neutral.

Our most important assumption is that the fitness of each new mutation is drawn
independently from some fixed probability distribution $\psi$ that does not depend
on the current fitness of the population. In the literature, this is known as the
House of Cards (HOC) model~\citep{Kingman77,Kingman78}. We assume that $\psi$ is a
continuous probability distribution, and we denote its probability density
function as $\psi(y)$, where $\psi(y)\,dy$ gives the probability that the fitness of
a new mutation lies in the interval $[y,y+dy]$.

Throughout, we assume that fitnesses are measured in terms of relative Malthusian
fitness~\citep[also known as additive fitness,][]{Sella05}, which is the $\log$ of
the relative Wrightian fitness (expected number of offspring divided by the
expected number of offspring of some arbitrary
type)~\citep{Crow70,Wagner10,Houle11}. We define the selection coefficient as the
fitness difference between the new mutant and the allele currently fixed in the
population. This definition approximates the standard selection coefficient when
relative Wrightian fitnesses are close to $1$ (with the approximation becoming
exact in the diffusion limit). While these choices allow a more elegant presentation, 
our results also hold for Wrightian fitnesses and the standard selection coefficient (see~\ref{sec:proofmain}).

Suppose a population is currently fixed for an allele with fitness $x$.  What is
the instantaneous rate of substitution to any other fitness, $y$, which we denote
$Q(y|x)\, dy$? Alleles with fitness $y$ originate within the population by
mutation at rate $N\psi(y)\,dy$, and each such mutation fixes with
probability $u(s,N)$, where $s$ is the selection coefficient, $s=y-x$.
Multiplying the rate of origination by the probability of fixation yields the rate
at which a population jumps from one fitness to another:
\begin{equation}
\label{eq:rates}
\begin{split}
& Q(y|x)\,dy =N\psi(y)\,u(y-x,N)\,dy.
\end{split}
\end{equation}
Thus, if $P(x,t)$ denotes the probability that the population is at fitness $x$ at time
$t$, we have:
\begin{equation}
\label{eq:master}
\frac{\partial P(x,t)}{\partial t}=\int_{-\infty}^{\infty} P(y,t)\,Q(x|y)-P(x,t)\,Q(y|x)\,dy
\end{equation}
In other words, the population's fitness is described by a continuous time and state
Markov process whose transition rates are given by Equation~\ref{eq:rates}. We also assume that at $t=0$ the population has some particular initial fitness, i.e.~$P(x,0)=1$ for some fitness $x$.

In what follows, we use the probability of fixation for a Moran process:~\citep[][see also \citealt{McCandlish13c}]{Moran59}: 
\begin{equation}
u(s,N)=\frac{1-e^{-s}}{1-e^{-Ns}}
\end{equation}
so that our results hold exactly for a haploid Moran process in the limit of weak mutation. 
However, our results also hold approximately for a diploid Wright-Fisher process
in the absence of dominance and in the limit of weak mutation, provided we
adjust appropriately for the difference in chromosomal population size and for the
slight difference in the form of the probability of fixation~\citep[see][]{Sella05}. 

\subsection*{Statistics describing the evolutionary process}

In this section we define some quantities that describe the process of evolution
under weak mutation. We let $k(x)$ denote the substitution rate for a population
with fitness $x$ and $m(x)$ the mean selection coefficient of the first mutation
to fix in a population initially at fitness $x$. Furthermore, we let $p(x)$ denote
the probability that this substitution will be advantageous.

Aside from the first substitution event, we are also interested in how the
expected fitness of the population changes over time.  We let $F_{x}(t)$ denote the
expected fitness of a population at time $t$ given that its fitness was $x$ at
time $t=0$.  Following~\citep{Kryazhimskiy09}, we call $F_{x}(t)$ the fitness
trajectory.

The properties of the first mutation to fix are related to the shape of the
fitness trajectory. In particular, the first derivative of the fitness trajectory
with respect to time at $t=0$ is simply the product of the substitution rate and
the expected selection coefficient of the first substitution: $\left.
\frac{\partial }{\partial t} F_{x}(t)\right|_{t=0}=m(x\,)k(x)$.
However,
as a technical matter it is worth noting that these means,  $m(x)$ and $F_{x}(t)$,
may not necessarily be finite in some instances, for example if $\psi$ has
extremely heavy tails~\citep[see][]{Joyce08}.

\section*{Results}

\subsection*{The equilibrium distribution}

In order to analyze evolution on an HOC landscape we must first determine whether
the population eventually reaches an equilibrium distribution, and, if so, what form
the equilibrium distribution takes.

The equilibrium distribution, which we will write as $\pi$, describes the
long-term probability of finding the population at any given fitness. Moreover,
for a population at equilibrium,  the frequency of
substitutions into fitness class $y$ is equal to the frequency of substitutions
from $y$ to other fitnesses. If an equilibrium distribution $\pi$ exists
it must be unique, since our model operates in continuous time 
and there is a positive
transition rate from every fitness $x$ to the region where $\psi$ is non-zero.

From these conditions, it is easy to show that if $\pi$ exists then its probability density function, $\pi(y)$, must 
be proportional to $\psi(y)e^{(N-1)y}$~\citep{Iwasa88,Berg04,Sella05}. As a
result, the equilibrium $\pi$ exists and its probability density function is given by
\begin{equation}
\label{eq:eq}
\pi(y)=\frac{\psi(y)e^{(N-1)y}}{\int_{-\infty}^{\infty}\psi(z)e^{(N-1)z}\, dz},
\end{equation}
provided that the normalizing constant
$Z=\int_{-\infty}^{\infty}\psi(z)e^{(N-1)z}\, dz$ is finite. This gives a
mathematical condition for the existence of the equilibrium distribution, but not
a biological condition, i.e.~one defined in terms of the evolutionary dynamics.
We have derived such a condition, which will be presented more fully elsewhere. Roughly speaking, this condition states that an equilibrium distribution exists if the fraction 
of
advantageous substitutions, $p(x)$, decreases to zero as the initial fitness, $x$, increases.
More precisely, if the limit as $x\rightarrow \infty$ of $p(x)$ equals 
zero for a population of size $N$, then an equilibrium distribution with finite mean
exists for a population of size $N-1$.

\subsection*{Deleterious substitutions can dominate short-term evolution}

We are now in a position to state our main results. In~\ref{sec:proofmain}, we
prove the following theorem: for any choice of population size $N>1$ and fitness
distribution $\psi$ such that an
equilibrium distribution $\pi$ with a finite mean exists, there exists some
fitness $x_{1}$ such that $m(x)<0$ for any $x>x_{1}$ and furthermore $x_{1}$ is
less than the mean of $\pi$. What this result says is that, for any population
whose starting fitness exceeds some constant, $x_1$, the first substitution to
fix is expected to have a negative selection coefficient. Indeed, the
average selection coefficient of the first substitution is guaranteed to be
negative when starting at the equilibrium mean fitness, and also when starting
with a fitness within some range strictly less than the equilibrium mean.

This result has important consequences for the shape of the fitness trajectory,
$F_{x}(t)$. Recall that $\left. \frac{\partial }{\partial t}
F_{x}(t)\right|_{t=0}$, the initial slope of the fitness trajectory starting at
$x$, has the same sign as mean selection coefficient of the first mutation to fix,
$m(x)$.  Therefore, for any starting fitness in the interval between $x_1$ and the
mean equilibrium fitness, the fitness trajectory must initially be decreasing.
Nonetheless, because asymptotically the fitness trajectory must approach the
equilibrium mean fitness, such trajectories must eventually increase back towards
the equilibrium mean.  Thus, provided an equilibrium with finite mean exists, for
any choice of population size $N>1$ and HOC model $\psi$, there is a range of
starting fitnesses that produce non-monotonic fitness trajectories: fitness is
expected to decrease in the short term, and then increase towards the equilibrium
mean in the long term.

\begin{figure}[p!]
\hspace{-12mm}\includegraphics{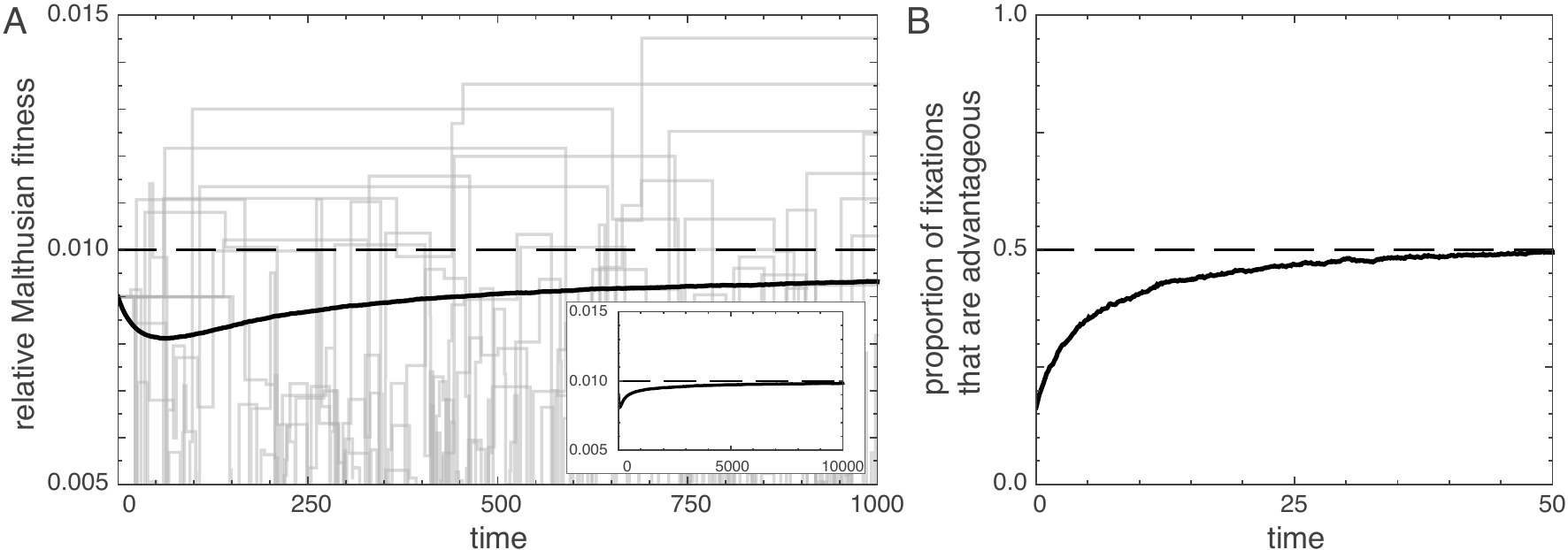}
\caption[]{
A. A non-monotonic fitness trajectory
(black line). The population begins with fitness $.009$, which is below the
equilibrium mean fitness of $.010$ (dashed line). The fitness trajectory (that is,
the ensemble mean fitness) decreases
initially, even though it then increases towards the equilibrium mean in the
long term (inset). Ten individual population realizations are shown in light gray 
(some of these trajectories extend below the region shown on the $y$-axis). B. The
expected proportion of
substitutions that are advantageous, as a function of time, starting again from fitness
$.009$.
Both panels are the result of $100,000$
simulations with the following parameters: $N=1001$, and the mutational distribution
$\psi$ is normal with mean $0$ and variance $10^{-5}$. The proportion of advantageous
fixations was calculated as the expected advantageous substitution
rate across this ensemble of populations, divided by the expected total
substitution rate.
}
\label{fig:trajectories}
\end{figure}

Figure~\ref{fig:trajectories}A gives an example of such a trajectory (black
curve). The population starts below the equilibrium mean fitness (dashed line),
and its expected fitness decreases initially, even though in the long term this
expectation increases to the equilibrium mean fitness (Figure~\ref{fig:trajectories}A, inset). 

Figure~\ref{fig:trajectories}A also shows several realizations of individual
population histories (gray lines), which provide some insight into why the fitness
trajectory (i.e.~the ensemble mean fitness) has a non-monotonic shape. There are
two things to notice about these trajectories. First, the individual trajectories
tend to exhibit early deleterious substitutions. The preponderance of early
deleterious substitutions is shown in Figure~\ref{fig:trajectories}B. Second, notice that the
more fit a population, the longer the waiting time until the next substitution
(see, e.g.~the upper-right corner of Figure~\ref{fig:trajectories}A). This means
that once a population fixes an extremely fit genotype it will tend to remain at
that genotype for a very long time.   At short time scales, however, such
advantageous mutations are unlikely to have occurred, whereas over the long term
it becomes increasingly likely that a high-fitness genotype will enter a
population and go to fixation. Thus, following its short-term decline, the
ensemble mean fitness eventually increases, as populations in the ensemble
eventually acquire a substitution to a very high fitness.

Our first result described above pertains to the expected selection coefficient
of the first substitution, and not to the probability that this first substitution
will be deleterious or advantageous. We have thus derived a corresponding
result that characterizes how often the first substitution is deleterious. In
particular, in~\ref{sec:proofmain}, we prove that for any choice of $N>1$ and
$\psi$ such that an equilibrium distribution $\pi$ exists, there exists some
fitness $x_{2}$ such that $p(x)<1/2$ for $x>x_{2}$, and, furthermore, $x_{2}$ is
less than the median of $\pi$. What this result says is that for any population
whose starting fitness exceeds some constant, $x_2$, the first substitution to fix
is more likely to be deleterious than advantageous. Moreover, 
even if the initial fitness is drawn at random from the equilibrium
distribution then the next substitution will be deleterious a majority of the
time (see~\ref{sec:proofmain}).

\begin{figure}[p!]
\center
\includegraphics{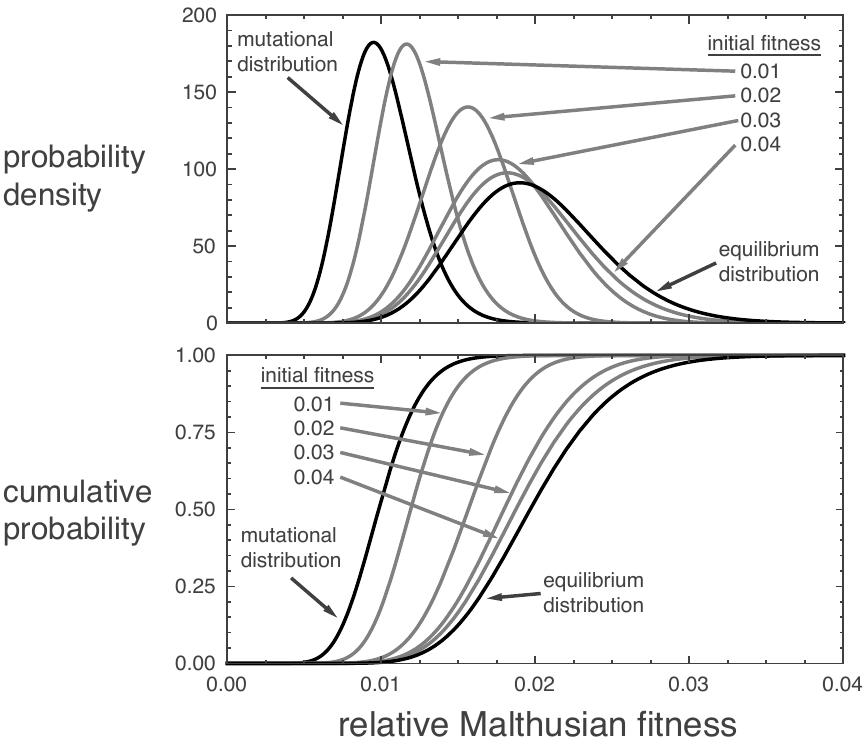}
\caption[HOC example]{
The distribution of
fitnesses following one substitution is bounded between the mutational distribution,
$\psi$, and the
equilibrium distribution, $\pi$. The top panel shows probability density
functions and
the bottom panel shows corresponding cumulative distributions.
Black curves correspond to the mutational
distribution, $\psi$, which is a Gamma-distributed with shape $20$ and mean
$.01$, in this example; and the corresponding equilibrium distribution, $\pi$, which is
Gamma-distributed with
shape $20$ and mean $.02$; $N=1001$. The distribution of fitnesses following the first
substitution are shown in gray lines, for populations initially at
fitness $x=.01$, $x=.02$, $x=.03$ and $x=.04$. Notice
that the cumulative distribution functions for all $x$ are bounded between the
cumulative distribution functions for the mutational distribution and for the
equilibrium distribution; also note that the cumulative distributions shift
monotonically to the right with increasing $x$.}
\label{fig:HOCcdf}
\end{figure}

Our results on the mean and the median fitness after one substitution are both
consequences of a more general set of results we prove in~\ref{sec:proofmain} and~\ref{sec:additional}. These results
yield a simple picture for how the distribution of fitnesses following a
substitution depends on a population's initial fitness.  In particular, we show
that increasing the initial fitness of a population always shifts the distribution
of fitnesses after one substitution to the right, and that this distribution
approaches the equilibrium distribution as the initial fitness increases to
infinity. Likewise, decreasing the initial fitness shifts the distribution of
fitnesses after the first substitution to the left, and this distribution
approaches $\psi$ in the limit of a large, negative initial fitness.  When we say that
a random variable $X$ is to the ``left" of a random variable $Y$ we mean that 
every quantile
of $Y$ is greater than the corresponding quantile of $X$, except perhaps for the
$0$-th and $1$-st quantiles; one might also say that the random variable $X$ is
stochastically less than the random variable $Y$. These results are illustrated
graphically in Figure~\ref{fig:HOCcdf}, which shows the probability distributions
$\pi$ and $\psi$, together with the fitness distribution after the first
substitution, for several different choices of the population's initial
fitness.

The key implication of Figure~\ref{fig:HOCcdf} is that the distribution of
fitnesses following one substitution is always to the left of the equilibrium
distribution, irrespective of the initial fitness. As a result, the mean of the
fitness distribution after one substitution must always be less than the mean of
the equilibrium distribution~(\ref{sec:proofmain}). And so, if a population starts at the equilibrium
mean fitness, then its mean fitness must be reduced by the first substitution
-- which is equivalent to saying that the expected selection coefficient of the
first substitution is negative.  A similar result holds for a population starting
at the median of the equilibrium fitness distribution~(\ref{sec:proofmain}).

In summary, we have shown that under the house of cards model deleterious
substitutions are expected to occur while a population is adapting -- that is,
while a population is still below its equilibrium mean fitness. Indeed,
deleterious substitutions can be more likely to occur than advantageous
substitutions during adaptation.  Moreover, such mutations are unconditionally
deleterious in the sense that they have no productive value for potentiating
subsequent adaptation.  We stress the generality of these results, which hold for
any choice of mutational distribution $\psi$ and for any population size $N>1$ so
long as an equilibrium distribution with finite mean exists. As we shall soon see,
some choices of $\psi$ guarantee such an equilibrium for all $N$, which implies
that the predominance of deleterious substitutions persists even when selection
against deleterious substitutions is arbitrarily strong.

\subsection*{Case study: the Gaussian House of Cards}

Our main result guarantees a range of initial fitnesses for which 
the initial step in adaptation is dominated by deleterious fixations.
But how large is this range of initial fitnesses? 
To investigate this question we consider the best-studied version of the
HOC model, in which the distribution of mutational effects, $\psi$, is Gaussian with
mean $\mu$ and standard deviation $\sigma$. In this case, it has been shown
\citep{Tachida91,Tachida96} that, for any choice of $N$ and $\sigma$, the equilibrium distribution, $\pi$, is also
Gaussian, with standard deviation $\sigma$ and mean $\mu+(N-1)\sigma^{2}$. 

Because $\psi$ and $\pi$ are both normally distributed with the same variance, we
can conduct a nice analysis by exploiting symmetry.
In particular, consider the point half-way between the means of $\psi$
and $\pi$, which we denote by $x^{*}=\mu+(N-1)\sigma^2/2$.
It turns out that the distribution of fitnesses fixed by the first substitution
for a population that starts with fitness $x^{*}+c$ is the same distribution as for a population starting at $x^{*}-c$ when
this distribution is reflected across $x^{*}$ (see \ref{sec:GuassianProofs}).
Thus, $m(x^{*}+c)=-m(x^{*}-c)$ and $p(x^{*}+c)=1-p(x^{*}-c)$. In particular,
this means that $m(x^{*})=0$ and $p(x^{*})=1/2$. 

Intuitively, these results suggest that the region where deleterious substitutions
dominate evolution includes all initial fitness greater than $x^{*}$.
We verified this conjecture using a systematic numerical search for all
parameters $\sigma$ between $10^{-5}$ and $10$ and $(N-1)\sigma$ between $10^{-3}$
and $20$. In all cases examined we found $m(x)<0$ and $p(x)<1/2$ for $x>x^{*}$.
Thus, when a population starts at the mean of the mutational
fitness distribution, deleterious substitutions begin to dominate once the
population's fitness
has increased half-way to its long-term expected value. In other words, there is
a substantial range of fitnesses for which
deleterious substitutions dominate adaptation.

The symmetry argument above also provides some insight about the size of
the selection coefficients of the first substitution. For instance, if the
population starts at the mean fitness of the mutational distribution, then one
intuitively expects the first substitution to have a large, positive effect due to
the abundant supply of advantageous mutations. This intuition is indeed correct.
At the same time, by symmetry, this intuition also implies that if a population starts at
the equilibrium mean fitness, then the first substitution will typically have a large,
deleterious effect. 

\begin{figure}[p!]
\center
\includegraphics{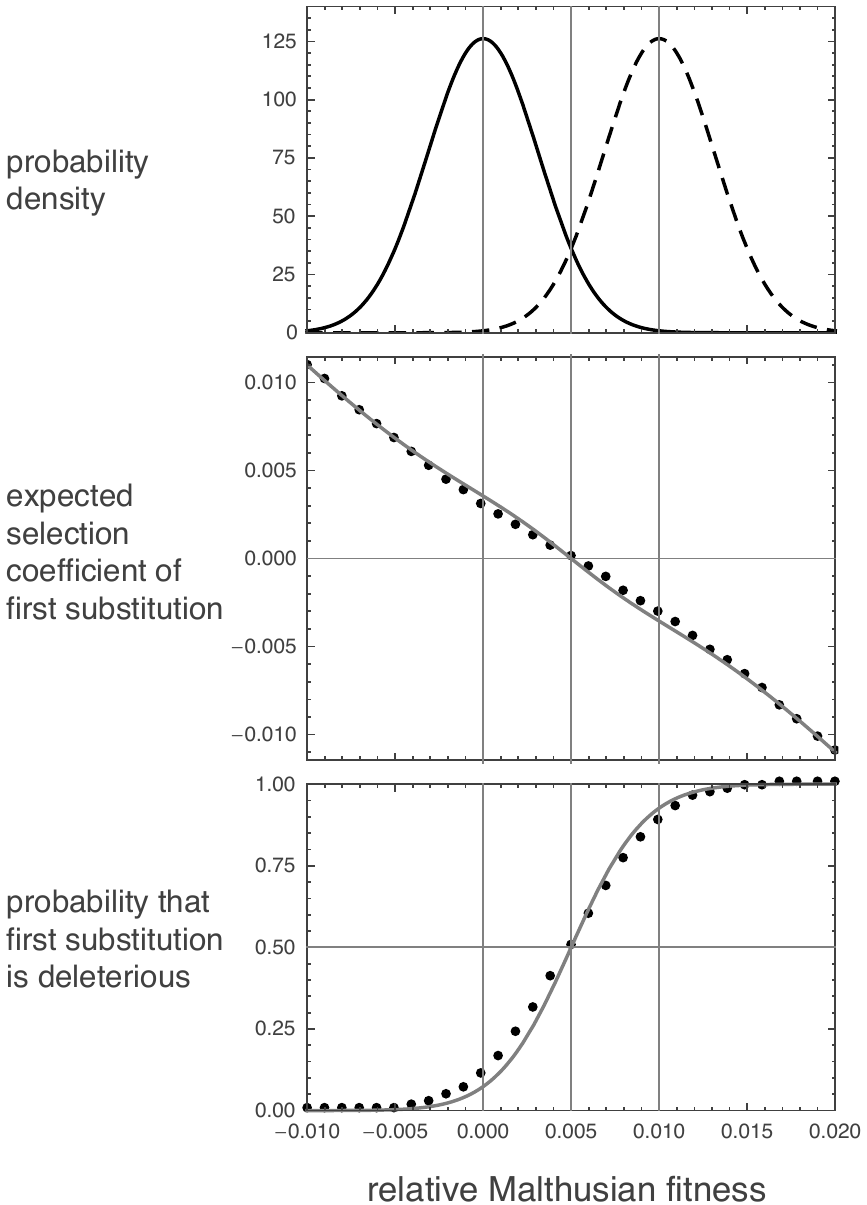}
\caption[HOC example]{
Probability density function of the HOC fitness distribution, $\psi$, and the equilibrium
distribution, $\pi$, together with the expected fitness effect of the first substitution, $m(x)$, and the probability that the first substitution will 
be deleterious, $1-p(x)$, all expressed as functions of the initial fitness, $x$. The HOC distribution is Gaussian with $\mu=0$ and $\sigma^{2}=10^{-5}$; $N=1001$. 
Vertical lines denote the mean of the HOC distribution, the mean fitness at equilibrium, and the average of these two fitnesses, $x^{*}$. Solid lines in the
bottom two plots show our analytical approximations (see~\ref{sec:analytical}); the individual points denote exact results as determined by numerical integration.}
\label{fig:HOCstats}
\end{figure}

In order to illustrate the preceding results, consider the case in which $\mu=0$,
$\sigma^{2}=10^{-5}$ and $N=1001$ (Figure~\ref{fig:HOCstats}). In this case, the
equilibrium fitness distribution is Gaussian with mean $.01$ and variance
$10^{-5}$, so that the average selection coefficient of new mutations, for a
population with the equilibrium mean fitness, is $Ns\approx-10$. Despite this
large, negative average selection coefficient, for a population starting at the
equilibrium mean fitness, the first substitution is deleterious $89\%$ of the
time, and the average selection coefficient of the first substitution is
$Ns=-3.1$.  The bottom two panels of Figure~\ref{fig:HOCstats} also show our
analytical approximation (solid lines) as compared with the exact results (points,
as determined by numerical integration), see~\ref{sec:analytical}.

It is important to remember that the symmetry about $x^{*}=\mu+(N-1)\sigma^2/2$ with
respect to the distribution of fitnesses in the first substitution does not imply
that all dynamics are symmetrical. Note in particular that because $u(y-x,N)$ is increasing in $x$ for $N>1$, the substitution rate, $k(x)$, is a strictly
decreasing function of the fitness $x$ for any HOC model with $N>1$. This implies, for instance, that while the fraction of deleterious substitutions
at $x^{*}+c$ equals the fraction of advantageous substitutions at $x^{*}-c$, the
actual rate of deleterious substitutions at $x^{*}+c$ must be less than the rate
of advantageous substitutions at $x^{*}-c$. In the example above, with $\mu=0$,
$\sigma=.01$ and $N=1001$, the substitution rate at fitness $\mu$ is $1.62$ times the
neutral substitution rate, at fitness $z$ it is $22\%$ of the neutral
substitution, and at the equilibrium mean fitness the substitution rate is $1.1\%$ of the neutral
rate. Thus, while the average fitness effect of a substitution
starting from the equilibrium mean fitness is relatively large and negative, the expected
waiting time for this first fixation to occur is much longer than the waiting time
starting from the mean of the HOC fitness distribution.

\subsection*{Case study: Fisher's geometric model}

Our results for the HOC model do not necessarily hold when the distribution of
fitnesses introduced by mutation is allowed to depend on the current
fitness~\citep{Kryazhimskiy09}. For instance, in this more general class of
correlated fitness landscapes it is possible to find circumstances where the
expected selection coefficient is positive when a population starts at its
equilibrium mean fitness. An immediate question, then, is whether the HOC model is
pathological in some sense, or whether our qualitative results hold for other
commonly used fitness landscapes.

In order to investigate this question, we turn to Fisher's geometric model (FGM),
which has emerged as an important framework for understanding both
adaptive and nearly neutral
evolution~\citep{Fisher30,Kimura85,Hartl96,Hartl98,Orr98}. In addition to its
prominence in the contemporary literature, we have chosen to study FGM because
 it is typically thought of as a paradigmatic example of a smooth,
correlated fitness landscape. This contrasts with the HOC model, which is 
an uncorrelated or rugged
landscape~\citep{Kauffman87}. If our results were caused by
the uncorrelated nature of the HOC model, then we would not expect to find the same
results in Fisher's geometric model.

For the sake of concreteness, we will consider a specific, widely used version of
FGM~\citep{Martin06,Martin07b,Martin08,Tenaillon07,Lourenco11} which assumes that
Malthusian fitness falls off quadratically with the distance to some optimum phenotype in an $n$-dimensional phenotypic space. In particular, we assume that the relative
Malthusian fitness of any phenotype $z$ is given by $-(1/2)||z||^{2}$, where
$||z||$ is the Euclidean distance to the optimum phenotype. This is equivalent to assuming that relative Wrightean fitness is a Gaussian function of the distance to the phenotypic optimum.
In addition, we assume that the distribution of phenotypes produced
by mutation is multivariate Gaussian with variance $\lambda$ centered at the
current phenotype $z$, and that new mutations enter the population as a Poisson
process at rate $N$.

Under this model, for a
population fixed for a phenotype with fitness $x$, the fitness of new mutants is distributed as
$-\lambda/2$ times a non-central chi-squared distributed random variable with $n$
degrees of freedom and non-centrality parameter $-2x/\lambda$~\citep[this follows from Appendix~2 of][]{Martin06}. Furthermore, it can
readily be confirmed using~\citet{Tenaillon07}'s method that the equilibrium
distribution for $N>1$ is $-1$ times a gamma distributed random variable, where
the gamma distribution has shape $n/2$ and scale $1/(N-1)$~\citep[note that this
distribution is independent of $\lambda$,][]{Sella05,Tenaillon07}. In particular,
the equilibrium mean fitness is $-n/(2(N-1))\approx -n/(2N)$, and while no analytical expression exists for the median of this distribution, the median is always greater than the mean~\citep{Chen86}.

The analysis of this model can be simplified by recognizing that the evolutionary dynamics of the model are much easier to understand if we work in scaled fitnesses, $Nx$, instead of fitness. This is because, to a very close approximation, under FGM the evolutionary dynamics of a population when measured in terms of scaled fitnesses depend only on the dimensionality, $n$, and on the compound parameter $N\lambda$ (\ref{sec:FGMscaling}). Thus, in addition to the probability that the first substitution is deleterious when starting from the equilibrium median fitness, it is most useful to examine the expected scaled selection coefficient ($Ns$) for a population starting at the equilibrium mean fitness and to consider the behavior of these quantities as a function of $n$ and $N\lambda$.

\begin{figure}[p!]
\center
\hspace{1cm}
\includegraphics{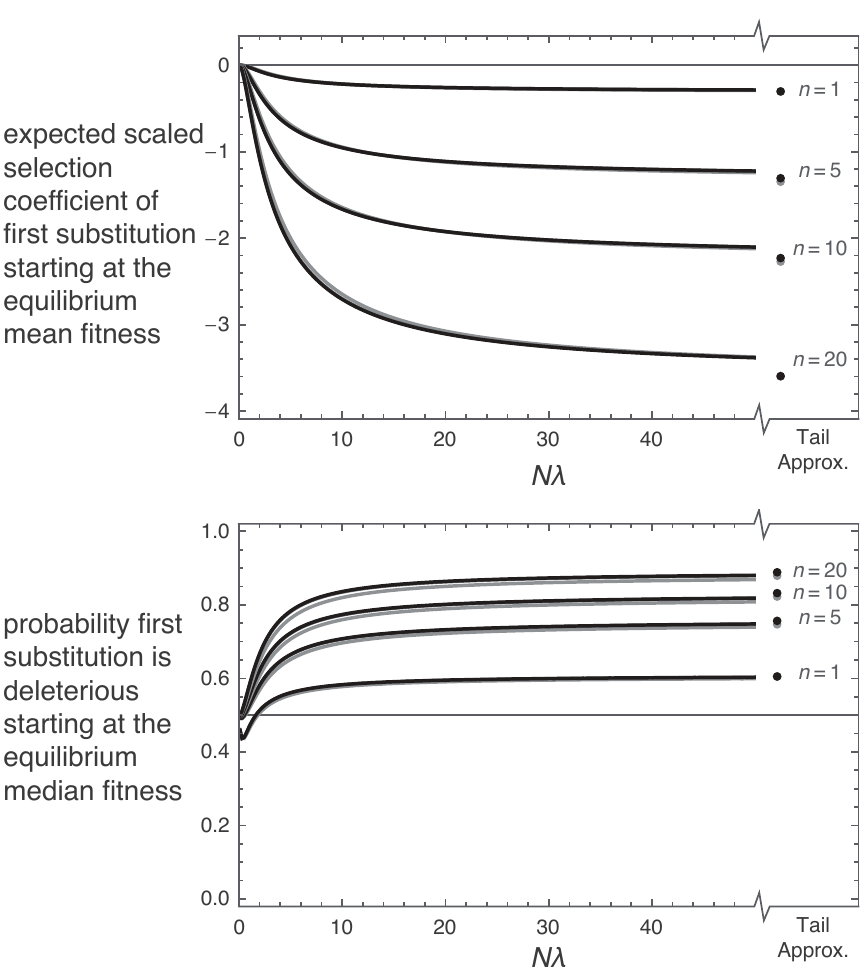}
\caption[FGM example]{
Expected scaled selection coefficient of the first substitution, starting from the equilibrium
mean fitness, and fraction of substitutions that are deleterious, starting from the equilibrium median
fitness, under Fisher's geometric model. Both panels show these quantities as a
function of $N\lambda$, for three values of $N$ ($N=11$, light gray; $N=101$, dark
gray; $N=1001$, black) and for several different values of the phenotypic
dimensionality $n$.
The lines for $N=101$ and $N=1001$ are almost completely overlapping. Small
circles show the asymptotic, large $N\lambda$ values, based on~\citet{Martin08}'s tail
approximation.}
\label{fig:FGMstats}
\end{figure}

Figure~\ref{fig:FGMstats} shows us that the evolutionary
dynamics under FGM fall in one of two regimes, depending on the value of
$N\lambda$.  When $N\lambda$ is very close to zero, the expected scaled
selection coefficient of the first substitution is also close to zero (or indeed,
sometimes very slightly positive). Similarly, in this regime, the fraction of
substitutions that are deleterious is approximately one-half. Thus, if mutations
have extremely small effects, the naive intuitions discussed in the Introduction
are basically accurate.

However, when $N\lambda$ is greater than approximately 2, the situation is quite different: all of the qualitative results we derived for the
HOC model also hold in this regime. In particular, the expected selection coefficient starting from the equilibrium mean fitness is negative; and the
majority of substitutions starting from the equilibrium median fitness are deleterious (Figure~\ref{fig:FGMstats}).  Furthermore, in this regime
increasing the dimensionality of phenotype space decreases the expected scaled selection coefficient of the first substitution and increases the proportion deleterious first substitutions~(Figure~\ref{fig:FGMstats}).  This constitutes a ``cost of
complexity''~\citep{Orr00} in that increasing the dimensionality of the phenotypic
space makes the dynamics less favorable for an adapting population.

To better understand this large $N\lambda$ regime, it is helpful to note that both of our statistics of interest become less sensitive to $N\lambda$
as $N\lambda$ becomes large. In this regime, most mutations are deleterious and strongly selected against, which suggests that mutations destined for
fixation will tend to come from the right tail of the mutational fitness distribution. Indeed, as $N\lambda$ becomes large, one can show that the mutational distribution converges to a tail
approximation previously introduced by~\citet{Martin08} (see \ref{sec:largelim}).  This tail approximation also suggests a method for
deriving analytical results in the large $N\lambda$ regime.  In particular,~\citet{Martin08} derive a number of results using this tail approximation
in conjunction with an approximation for the probability of fixation that allows only for advantageous fixations. By modifying their method to allow
for the possibility of deleterious fixations, we obtain the following simple expression for the proportion of substitutions that will be advantageous,
starting from the equilibrium median fitness (\ref{sec:FGManalytic}): \begin{equation} \frac{1}{2+n/2}.  \end{equation} This expression clearly
indicates that, in the large $N\lambda$ regime, the proportion of deleterious substitutions from the equilibrium median fitness exceeds $1/2$ and is an increasing
function of the phenotypic dimensionality, $n$.  Applying similar methods to the expected scaled selection coefficient of the first substitution from
the equilibrium mean fitness indicates that for large $N\lambda$ this quantity is 1) always negative and 2) solely a function of  $n$  (\ref{sec:FGManalytic}).

\section*{Discussion}

We have demonstrated that evolution can be dominated by deleterious substitutions
in the short term even when adaptation will eventually occur over the long term.
Critically, this initial decrease in fitness is not due to alleles that potentiate
or otherwise promote the subsequent increase in fitness~\citep{Cowperthwaite06},
as would occur if these populations were crossing fitness valleys~\citep{Wright32,
Kimura85,vanNimwegen00,Weinreich05a,Weissman09}.  Rather, this decrease in fitness
is due to unconditionally deleterious substitutions that have absolutely no effect
other than to decrease the fitness of the population. Our results imply that
studies of adaptation that ignore deleterious substitutions must give qualitatively
incorrect predictions, at least in some regimes.

More precisely, we have considered a population evolving under a Moran process in
the limit of weak mutation, where the fitnesses of new mutations are drawn
independently from some fixed distribution~\citep[i.e., the house of cards
model]{Kingman77,Kingman78}. Our results hold for arbitrarily large populations,
and hence arbitrarily strong selection, under only the condition that a long-term
equilibrium distribution exists with finite mean.  This condition is always
satisfied under the biologically plausible assumption that the probability that
the next substitution be advantageous approaches zero as the fitness of a
population increases to infinity.

Our two main results guarantee a region of initial fitnesses, below the equilibrium mean fitness, where the expected selection coefficient of the
first substitution is negative; and a region of initial fitnesses, below the equilibrium median fitness, where the first substitution will be
deleterious a majority of the time. In addition, we have shown qualitatively similar behavior under Fisher's Geometric Model, where a ``cost of
complexity''~\citep{Orr00,Martin06} also arises, so that the magnitude of these effects becomes larger as the dimensionality of the phenotypic space
increases~\citep[c.f.][]{Fisher30}. 

Our results are surprising, and so it is important to develop some intuitions for
why they hold. At the most basic level, our results are possible because the
process of adaptation itself tends to increase fitness until the vast majority of
possible mutations are deleterious~\citep{Hartl98,Silander07}. Indeed, this
majority is eventually so vast that, at some sufficiently high fitness,
deleterious mutations become responsible for the bulk of substitutions. 

At a slightly more detailed level, the question is one of time scales: natural
selection is so efficient that in the long term a population will spend most of
its time at extremely rare, high-fitness alleles. However, the waiting time for
such alleles to enter the population by mutation is extremely long, so that many
populations will experience a deleterious substitution before even having the
opportunity to fix an allele with high fitness. Thus, fitness may tend to decrease
in the short term even though it will on average increase in the long term. 

Finally, at a more mathematical level, our results follow from a deeper theorem
that states that, no matter what the initial fitness, every quantile of the
distribution of fitnesses after one substitution is less than the corresponding
quantile of the equilibrium fitness distribution.  We have used this deeper result to
derive our two main results, but it has other consequences that we have not yet
investigated in detail.  For instance, consider a population that has experienced
a recent change in environment or in population size, such that its current
fitness is much higher than the equilibrium distribution. In this case, our result
implies that the distribution of fitnesses after one substitution occurs will be
to the left of the equilibrium distribution.  Thus, fitness does not decrease to
the new equilibrium by many small steps, but rather takes one large downhill jump.

The relationship between the equilibrium fitness distribution and the distribution of fitnesses after one substitution has an intuitive basis. 
In essence, high-fitness genotypes are occupied at high frequency in equilibrium for two distinct reasons:
\begin{enumerate}
\item A population is more likely to become fixed for a high-fitness allele because the probability of fixation is greater for high-fitness mutations.
\item Once fixed at a high-fitness allele, the time until the next fixation tends to be long, because of selection against deleterious substitutions.
\end{enumerate}
While both of these factors push the equilibrium distribution towards very high fitnesses, 
only the first factor influences the distribution of fitnesses after one substitution. As a result, the distribution of fitnesses at equilibrium is 
always stochastically greater than the distribution of fitnesses after one substitution.

This distinction between the two factors above, and their different effects on
evolution, is often neglected in the literature.  In particular, the weak-mutation literature is
primarily composed of two types of Markov models.  In the first type of model,
time is measured in units that are independent of when substitution events occur
-- so that time continues to elapse during the periods between subsequent
substitution events in the population~\citep[see e.g.][]{Iwasa88,Sella05,Berg04,
Kryazhimskiy09,McCandlish13b}.  In the second type of model, time is always
discrete, and each unit of time corresponds to a single substitution event, no
matter how long a population actually spends waiting in between substitution
events
~\citep[e.g.~][]{Gillespie83,Gillespie84landscape,Hartl98,Orr02,Weinreich06,
Draghi13}. Mathematically, a model of the first type is called the ``full chain'';
the corresponding model of the second type is called the ``embedded
chain''. Both of these Markov processes are important theoretical objects.
However, while both of the above-mentioned factors influence the behavior of the
full chain, only the first factor influences the behavior of the corresponding
embedded chain. This means that the full chain and the embedded chain can have very different properties.

For instance, at equilibrium, evolution under the house of cards model is
characterized by periods of rapid substitutions among relatively low-fitness
alleles interspersed with long periods of stasis at high-fitness
alleles~\citep{Iwasa93}. As a result, the equilibrium distribution of the full
chain is concentrated at higher fitnesses than the equilibrium distribution of the
embedded chain, since the embedded chain neglects the extra amount of time a
population typically spends fixed at high-fitness alleles (more precisely, the probability density function of the equilibrium distribution for the embedded chain is proportional to $\pi(x)\,k(x)$, where the substitution rate $k(x)$ is a decreasing function of $x$). Which type of these two
equilibrium distributions is relevant for a particular purpose depends on whether
one observes a population at a random time (as is most often the case for
empirical data), in which case the full chain is appropriate; or whether one
observes a population at a random substitution, in which case the embedded
chain is appropriate. 

Our results also have several important implications for the common practice of
ignoring deleterious substitutions when studying adaptation. First, by
demonstrating that adaptation can occur even when unconditionally deleterious
substitutions dominate short-term evolution, we have shown that any satisfactory,
general theory of adaptation must allow for the possibility of deleterious
substitutions.

Second, our results characterize the conditions under which deleterious
substitutions initially {\it dominate} the the adaptive process, i.e.~when a
majority of substitutions are deleterious or when fitness decreases in
expectation.  The range of parameters where deleterious substitutions play a
non-negligible role is likely to be much larger.

Third, we emphasize that the frequency of deleterious substitutions changes as a
population evolves. This frequency is therefore a property of the evolutionary
dynamics rather than a parameter that can be assigned at the outset.

Finally, the problem of neglecting deleterious substitutions is particularly acute
for studies that depend on ``extreme value theory'', or ``tail approximations,''
because the mathematical methods underlying such studies assume the current
fitness is already in the extreme right tail of the mutational fitness
distribution. This is precisely the regime where we expect deleterious
substitutions to be most important and therefore precisely the regime where they
should not be neglected. One way forward for these studies is to
incorporate approximations that allow deleterious substitutions to occur, as we
did here in our analysis of Fisher's Geometric Model. 

An important limitation of our current analysis is that it has been conducted in the limit of weak mutation. It thus remains an open question how
broadly these phenomena occur in other population-genetic regimes. However, a recent study of cancer progression by~\citet{McFarland13} suggests that
similar dynamics, featuring a predominance of deleterious substitutions despite long-term adaptation, may occur even in populations with substantial polymorphism.

\section*{Acknowledgements} We thank Ricky Der and Mitchell Johnson for fruitful discussions and Warren Ewens for comments on the manuscript. J.B.P. acknowledges funding from the Burroughs Wellcome Fund, the 
David and
Lucile Packard Foundation, the James S.~McDonnell Foundation, the Alfred P.~Sloan Foundation, the U.S.~Department of the
Interior (D12AP00025), and the Foundational Questions in Evolutionary Biology Fund (RFP-12-16). J.B.P., C.L.E. and D.M.M. acknowledge funding
from the U.S.~Army Research Office (W911NF-12-1-0552). 

\clearpage
\pagebreak


\renewcommand{\refname}{Literature Cited}
\putbib[MainBibtexDatabase]

\clearpage
\pagebreak

\def\thesection{Appendix \Alph{section}}
\renewcommand{\theequation}{A\arabic{equation}}
\setcounter{equation}{0}

\section{Proof of the main result}
\label{sec:proofmain}
In this appendix, we provide a characterization of the distribution of fitnesses after one substitution. In particular, we provide bounds on the mean and cumulative 
distribution function of this distribution in terms of the means and cumulative distribution functions of the mutational fitness distribution, $\psi$, and the corresponding 
equilibrium distribution, $\pi$. Additional results describing how changing the initial fitness of the population changes the distribution of fitnesses after one substitution can 
be found in~\ref{sec:additional}.

More formally, the distribution of fitnesses after one substitution has a probability density function given by $Q(y|x)/k(x)\,dy$, where $Q(y|x)\,dy$ is the substitution rate 
from $x$ to the interval $[y,y+dy]$ and $k(x)=\int_{-\infty}^{\infty} Q(y|x)\,dy$ is the substitution rate at $x$. In order to understand the structure of this distribution, we first state a very general result 
concerning the expected value of any non-decreasing function, $f$, of fitness with respect to the distribution of fitnesses after one substitution as compared with the 
expected value of $f$ for fitnesses drawn from $\psi$ and $\pi$. Deferring the proof of the main result until the end of this Appendix, we then state several corollaries 
corresponding to statements made in the main text, where these corollaries are derived by making specific choices for the non-decreasing function $f$.

Here is the general result:

\begin{theorem}
\label{th:main}
Under the house of cards model with a continuous mutational distribution $\psi$ and with $N>1$, for any arbitrary initial fitness $x$ and non-decreasing, real-valued function 
$f$ defined on the real line, we have
\begin{equation}
\int_{-\infty}^{\infty} f(y)\,\psi(y)\,dy \leq  \int_{-\infty}^{\infty} f(y)\,\frac{Q(y|x)}{k(x)}\,dy \leq \int_{-\infty}^{\infty} f(y)\,\pi(y)\,dy
\end{equation}
whenever the corresponding integrals are finite. The above inequalities are strict if there exist two open intervals of non-zero length within the support of $\psi$ such that the value of $f$ is 
strictly greater for members of the first interval than it is for members of the second.
\end{theorem}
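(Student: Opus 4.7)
The plan is to recognize both the middle and right expressions as $\psi$-weighted averages of $f$ and then to reduce both inequalities to a single correlation inequality. Cancelling the factor $N\psi(y)$ in $Q(y|x)/k(x)$ gives
\begin{equation}
\int_{-\infty}^{\infty} f(y)\,\frac{Q(y|x)}{k(x)}\,dy = \frac{E_\psi[f(Y)\,u(Y-x,N)]}{E_\psi[u(Y-x,N)]},
\end{equation}
while the formula $\pi(y)\propto\psi(y)e^{(N-1)y}$ gives
\begin{equation}
\int_{-\infty}^{\infty} f(y)\,\pi(y)\,dy = \frac{E_\psi[f(Y)\,e^{(N-1)Y}]}{E_\psi[e^{(N-1)Y}]}.
\end{equation}

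The key identity I would invoke is that, for any non-negative weight $w$ with $0<E_\mu[w]<\infty$ under a probability measure $\mu$, reweighting $\mu$ by $w$ shifts the $\mu$-mean of $f$ by $\text{Cov}_\mu(f,w)/E_\mu[w]$, and this covariance is non-negative whenever $f$ and $w$ are both non-decreasing, by Chebyshev's inequality $E[(f(Y)-f(Y'))(w(Y)-w(Y'))]\geq 0$ applied to two i.i.d.~$\mu$-copies (the integrand has constant sign pointwise). Applying this with $\mu=\psi$ and $w(y)=u(y-x,N)$ immediately yields the left inequality, since $u(s,N)$ is strictly increasing in $s$ for $N>1$.

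For the right inequality I would rewrite it as a comparison between two reweightings of $\psi$, with weights $w_1(y)=u(y-x,N)$ and $w_2(y)=e^{(N-1)y}$. Letting $\mu_1$ be the probability measure with density proportional to $\psi\cdot w_1$ (i.e.~the distribution of fitness after one substitution), the ratio $w_2/w_1$ is, up to a constant, the Radon-Nikodym derivative $d\pi/d\mu_1$, and the covariance identity applied with base measure $\mu_1$ reduces the right inequality to checking that $w_2/w_1$ is non-decreasing in $y$. This follows from the geometric-sum identity $(1-e^{-Ns})/(1-e^{-s})=\sum_{k=0}^{N-1}e^{-ks}$ (using that $N$ is an integer), which yields
\begin{equation}
\frac{e^{(N-1)y}}{u(y-x,N)} = e^{(N-1)x}\sum_{k=0}^{N-1}e^{k(y-x)},
\end{equation}
a strictly increasing function of $y$.

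The strict versions follow from the same covariance representation. If there exist positive-length intervals $I_1,I_2\subset\text{supp}(\psi)$ with $f$ strictly larger on $I_1$ than on $I_2$, then because $f$ is non-decreasing every point of $I_1$ must lie to the right of every point of $I_2$, so the event $\{Y\in I_1,Y'\in I_2\}\cup\{Y\in I_2,Y'\in I_1\}$ has positive $\psi\otimes\psi$-measure, and on this event the Chebyshev integrand $(f(Y)-f(Y'))(w(Y)-w(Y'))$ is strictly positive for both choices of $w$, since $u(\cdot-x,N)$ and $w_2/w_1$ are both strictly increasing. The hardest step, I expect, is simply verifying the monotonicity of $w_2/w_1$; once that identity is in hand, the ``whenever the corresponding integrals are finite'' hypothesis justifies the Fubini swaps and reweighting manipulations above, and the rest is routine bookkeeping.
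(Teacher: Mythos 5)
Your proof is correct. It rests on the same two monotonicity facts as the paper's argument: that $u(s,N)$ is strictly increasing in $s$, and that $e^{(N-1)s}/u(s,N)$ is strictly increasing in $s$ --- the latter is precisely the inequality $u(s,N)/u(s',N)<e^{(N-1)(s-s')}$ for $s>s'$ that the paper imports from the McCandlish et al.~reference; your geometric-sum identity proves it directly for integer $N$ (and the identity $u(s,N)=e^{(N-1)s}u(-s,N)$ gives it for arbitrary real $N>1$ if you want to drop the integrality assumption). Where you genuinely differ is in how monotonicity of the likelihood ratio is converted into an ordering of expectations. The paper uses a single-crossing argument: it picks the point $z$ where $f$ crosses $E_{\pi}(f)$ (resp.~$E_{\psi}(f)$), splits the integral there, and bounds $u(y-x,N)$ above or below by $u(z-x,N)e^{(N-1)(y-z)}$ (resp.~by $u(z-x,N)$) on each side, so that the whole expression collapses to a nonnegative multiple of $\int (f-E)\,\pi\,dy=0$. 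You instead use the symmetrized two-point (Chebyshev covariance) inequality. The two devices are interchangeable here, but your packaging makes visible that both inequalities are instances of one standard fact --- reweighting by a non-decreasing likelihood ratio can only increase the expectation of a non-decreasing function --- applied along the chain $\psi\to\mu_1\to\pi$ with ratios $u(y-x,N)$ and $e^{(N-1)y}/u(y-x,N)$, and it handles the strictness clause cleanly. The only point worth adding is that for the right-hand inequality your base measure is $\mu_1$ rather than $\psi$, so you should remark that $w_1>0$ everywhere makes $\mu_1$ and $\psi$ mutually absolutely continuous, whence your positive-measure event retains positive measure under $\mu_1\otimes\mu_1$.
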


Our biological results then follow by picking particular non-decreasing functions $f$. 

\begin{corollary}[\textbf{Expected fitness after one substitution}]
\label{cor:expfit}
For any $N>1$, continuous probability distribution $\psi$, and any choice of initial fitness:
\begin{enumerate}
\item The expected Malthusian fitness after one substitution is strictly less than the expected Malthusian fitness at equilibrium and strictly greater than the expected 
Malthusian fitness of new mutations, should these expectations be finite.
\item The expected Wrightean fitness after one substitution is strictly less than the expected Wrightean fitness at equilibrium and strictly greater than the expected 
Wrightean fitness of new mutations, should these expectations be finite.
\end{enumerate}
\end{corollary}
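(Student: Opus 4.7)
The plan is to obtain both parts of Corollary~\ref{cor:expfit} as immediate applications of Theorem~\ref{th:main} with particular non-decreasing choices of the function $f$. No independent argument is needed; the corollary is essentially a dictionary that translates the abstract expected-value bound into statements about the two natural fitness scales.

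For the Malthusian part, I would take $f(y)=y$, which is strictly increasing and hence certainly non-decreasing. Theorem~\ref{th:main} then gives exactly the desired sandwich
\begin{equation}
\int_{-\infty}^{\infty} y\,\psi(y)\,dy \;\leq\; \int_{-\infty}^{\infty} y\,\frac{Q(y|x)}{k(x)}\,dy \;\leq\; \int_{-\infty}^{\infty} y\,\pi(y)\,dy,
\end{equation}
whenever these integrals are finite, which is the finiteness hypothesis already included in the statement. To promote both inequalities to strict inequalities, I would verify the supplementary hypothesis of Theorem~\ref{th:main}: since $\psi$ is a continuous probability distribution, its support contains an open interval, and within any such interval one can pick two disjoint open sub-intervals $I_1$ to the left of $I_2$; because $f(y)=y$ is strictly increasing, every point of $I_2$ yields a strictly larger value of $f$ than every point of $I_1$, so the hypothesis is satisfied.

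For the Wrightean part, I would use the fact, recalled in the Methods, that relative Wrightean fitness is the exponential of relative Malthusian fitness, and simply take $f(y)=e^{y}$. This is again strictly increasing, so Theorem~\ref{th:main} applies and yields the analogous chain of inequalities for the expected values of $e^{Y}$ under $\psi$, under $Q(\cdot|x)/k(x)$, and under $\pi$. The verification of strictness is identical: on any two disjoint open sub-intervals of the support of $\psi$, the strictly increasing function $e^{y}$ takes strictly larger values on the right interval than on the left.

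There is essentially no real obstacle here: the content lives entirely in Theorem~\ref{th:main}, and the corollary is a routine specialization. The only mild subtlety worth flagging in writing is the finiteness clause, which is handled by the ``should these expectations be finite'' qualifier in the statement; in particular, for heavy-tailed $\psi$ the Wrightean expectation $\int e^{y}\pi(y)\,dy$ need not exist, but whenever all three integrals in the chain are finite the bounds and their strict versions follow without further work.
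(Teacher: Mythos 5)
Your proposal is correct and follows essentially the same route as the paper's proof: both parts are obtained by applying Theorem~\ref{th:main} with $f(y)=y$ and $f(y)=e^{y}$ respectively, with strictness following from the fact that these functions are strictly increasing. Your extra remark spelling out why the strictness hypothesis holds (two disjoint open sub-intervals within the support of the continuous distribution $\psi$) is a slightly more explicit version of what the paper leaves implicit, but the argument is the same.
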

\begin{proof}
The first statement follows directly from Theorem~\ref{th:main} by choosing $f(y)=y$. This function is strictly increasing and so the condition for the strictness of the 
inequalities is satisfied. The second statement follows directly from Theorem~\ref{th:main} by choosing $f(y)=e^{y}$ (Malthusian fitness is simply $\log$ Wrightean fitness); 
this function is also strictly increasing and so the inequalities are strict.
\end{proof} 
An immediate consequence of this corollary is that for a population that starts at the equilibrium mean fitness, the first substitution is expected to decrease fitness and so 
the expected selection coefficient of the first substitution is negative. The fact that the expected selection coefficient of the first substitution is a continuous function of the initial fitness (see~\ref{sec:additional} for a proof) implies the existence of some critical initial fitness, strictly less than the equilibrium mean fitness, such that the expected selection 
coefficient of the first substitution is always negative for initial fitnesses greater than this initial fitness, as stated in the main text. See also~\ref{sec:additional} for a proof that the expected selection coefficient of the first substitution is finite whenever either of 1) the mean fitness of new mutations or 2) the mean fitness at equilibrium is finite.

\begin{corollary}[\textbf{Cumulative distribution function of fitness after one substitution}]
\label{cor:cdf}
For $N>1$, any continuous probability distribution $\psi$, any choice of initial fitness $x$, and for all fitnesses $z$ such that the probability that a new mutant has fitness 
less than or equal to $z$ is strictly between $0$ and $1$, the cumulative distribution function of the distribution of fitnesses after one mutation is strictly less than the 
cumulative distribution function of the mutational distribution and, if an equilibrium distribution exists, strictly greater than the cumulative distribution function of the distribution 
of fitnesses at equilibrium.
\end{corollary}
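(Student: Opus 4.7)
The plan is to deduce Corollary~\ref{cor:cdf} as an immediate specialization of Theorem~\ref{th:main}, by choosing $f$ to be the indicator of a half-line. Specifically, for each fixed $z$ I would apply Theorem~\ref{th:main} with
\begin{equation*}
f(y) \;=\; \mathbbm{1}_{(z,\infty)}(y),
\end{equation*}
which is non-decreasing and bounded by $1$, so all three integrals appearing in the theorem are automatically finite.

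With this choice, the three integrals in Theorem~\ref{th:main} evaluate, respectively, to $1-\Psi(z)$, $1-\int_{-\infty}^{z}Q(y|x)/k(x)\,dy$, and $1-\Pi(z)$, where $\Psi$ and $\Pi$ denote the CDFs of $\psi$ and $\pi$. Subtracting from $1$ reverses the inequalities of the theorem and yields
\begin{equation*}
\Pi(z) \;\le\; \int_{-\infty}^{z}\frac{Q(y|x)}{k(x)}\,dy \;\le\; \Psi(z),
\end{equation*}
which is exactly the desired (weak) bound on the CDF of the distribution of fitnesses after one substitution.

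The remaining step is to upgrade these bounds to strict inequalities whenever $\Psi(z)\in(0,1)$. For this I would verify the strictness hypothesis of Theorem~\ref{th:main}, which demands two open intervals inside the support of $\psi$ on which $f$ takes different constant values. Since $\Psi(z)\in(0,1)$, both of the half-lines $(-\infty,z)$ and $(z,\infty)$ carry positive $\psi$-mass, and for a continuous distribution $\psi$ this forces each to contain an open subinterval lying in the support of $\psi$; on one of these $f\equiv 0$ and on the other $f\equiv 1$. This triggers the strict version of Theorem~\ref{th:main}, giving $\Pi(z) < \int_{-\infty}^{z}Q(y|x)/k(x)\,dy < \Psi(z)$ as required. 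The only place the argument asks for any care is this last support step — extracting open intervals from the fact that $\psi$ places mass on both sides of $z$; the rest of the proof is essentially the one-line observation that a CDF is the expectation of an indicator.
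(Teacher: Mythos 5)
Your proposal is correct and follows essentially the same route as the paper's own proof: both apply Theorem~\ref{th:main} to the indicator $f(y)=\mathbbm{1}_{(z,\infty)}(y)$, obtain the tail inequality, pass to complements to reverse it for the CDFs, and establish strictness by noting that $0<\Psi(z)<1$ forces $\psi$ to place mass (hence an open interval of positivity) on each side of $z$. No substantive differences to report.
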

\begin{proof}
Let $f$ be the function that takes the value $1$ on the interval $(z,\infty)$ and 0 otherwise. Then Theorem~\ref{th:main} gives us:
\begin{equation}
\int_{z}^{\infty}\psi(y)\,dy<\int_{z}^{\infty}\frac{Q(y|x)}{k(x)}dy<\int_{z}^{\infty}\pi(y)\,dy
\end{equation}
where the inequalities are strict since $0<\int_{-\infty}^{z}\psi(y)dy<1$ implies that there exists an open interval where $\psi(y)>0$ with $y<z$ and another open interval where 
$\psi(y)>0$ with $y>z$. However, since probability distributions must integrate to $1$, we also have
\begin{equation}
\int_{-\infty}^{z}\psi(y)\,dy>\int_{-\infty}^{z}\frac{Q(y|x)}{k(x)}dy>\int_{-\infty}^{z}\pi(y)\,dy
\end{equation}
as required.
\end{proof}
This corollary then allows us to prove the following result:
\begin{corollary}[\textbf{Quantiles of fitness after one substitution}]
\label{cor:q}
 $N>1$, any continuous probability distribution $\psi$, any choice of initial fitness $x$, and any choice of $q\in(0,1)$, the $q$-th quantile of the distribution of fitnesses after 
one substitution is strictly greater than the $q$-th quantile of the distribution of fitnesses introduced by mutation and, if an equilibrium distribution exists, strictly less than 
the $q$-th quantile of the equilibrium fitness distribution.
\end{corollary}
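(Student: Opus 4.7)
The plan is to deduce the quantile statement directly from the CDF bounds of Corollary~\ref{cor:cdf} by inverting those CDFs, a clean strategy because all three distributions in play are continuous. So the first thing I would do is record that $\Psi$, $F_x$, and $\Pi$ are continuous CDFs: $\Psi$ because $\psi$ is continuous by hypothesis, $F_x$ because its density $N\psi(y)\,u(y-x,N)/k(x)$ is continuous, and $\Pi$ because its density is proportional to $\psi(y)\,e^{(N-1)y}$. By continuity, for each $q \in (0,1)$ the $q$-th quantiles $z_q$ of $\psi$ and $w_q$ of $\pi$ are characterized by $\Psi(z_q) = q$ and $\Pi(w_q) = q$.

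The main step is to apply Corollary~\ref{cor:cdf} at the two points $z_q$ and $w_q$, which requires verifying in each case that the hypothesis $0 < \Psi(\cdot) < 1$ holds. This is automatic at $z_q$, since $\Psi(z_q) = q \in (0,1)$. At $w_q$ it is where I expect the only genuine subtlety: I would observe that $\pi$ and $\psi$ share the same support, because $\pi(y) \propto \psi(y)\,e^{(N-1)y}$ and the exponential factor is strictly positive on the real line. Hence $\Pi(w_q) = q \in (0,1)$ forces $w_q$ to have $\pi$-mass, and therefore $\psi$-mass, strictly to both its left and its right, giving $0 < \Psi(w_q) < 1$ as required.

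Corollary~\ref{cor:cdf} then yields the strict inequalities $F_x(z_q) < \Psi(z_q) = q$ and $F_x(w_q) > \Pi(w_q) = q$. The final step is to invert these. Since $F_x$ is non-decreasing, $F_x(z_q) < q$ implies that no $z \le z_q$ satisfies $F_x(z) \ge q$, so the $q$-th quantile of the post-substitution distribution lies strictly to the right of $z_q$. Since $F_x$ is also continuous, $F_x(w_q) > q$ forces a neighborhood $(w_q - \varepsilon, w_q]$ on which $F_x$ already exceeds $q$, so the $q$-th quantile lies strictly to the left of $w_q$. The only nontrivial ingredient is the support identification used to apply Corollary~\ref{cor:cdf} at $w_q$; everything else is a mechanical consequence of continuity and monotonicity of CDFs.
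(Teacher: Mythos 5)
Your proof is correct and takes essentially the same approach as the paper's: both deduce the quantile ordering by inverting the pointwise CDF inequalities of Corollary~\ref{cor:cdf}. The only immaterial difference is that the paper evaluates those inequalities at the $q$-th quantile of the post-substitution distribution itself, whereas you evaluate them at the $q$-th quantiles of $\psi$ and $\pi$ and verify the hypothesis $0<\Psi(\cdot)<1$ there via the shared support of $\psi$ and $\pi$.
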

\begin{proof}
This follows immediately from Corollary~\ref{cor:cdf} by choosing $z$ to be the $q$-th quantile of the distribution of fitnesses after one substitution.
\end{proof}
Results about when a majority of substitutions are deleterious then follow from choosing $q=1/2$ and asking when this quantile is less than the initial fitness. If the initial 
fitness is the equilibrium median fitness, then the median fitness after one substitution is less than the initial fitness, meaning that the probability that the first substitution is 
deleterious is greater than $1/2$. The fact that the probability that the first substitution is deleterious is continuous in the initial fitness (see~\ref{sec:additional} for a proof) then implies 
the existence of the critical initial fitness referred to in the main text.

Corollary~\ref{cor:q} can also provide insight into the dynamics of a population at equilibrium. In particular, note that if the probability that the next substitution is deleterious 
was $q$ for each quantile $q$ of the equilibrium distribution, then the probability that the next substitution would be deleterious for a population whose fitness was drawn at 
random from the equilibrium distribution would be $1/2$. However, Corollary~\ref{cor:q} implies that the probability that the next substitution is deleterious at a particular 
fitness is always strictly greater than the quantile of that fitness in the equilibrium distribution. Thus, the probability that the next substitution is deleterious for a population 
whose fitness is drawn from the equilibrium distribution must be strictly greater than $1/2$.

It is worth noting that the above results can all be extended to the case where the house of cards distribution is discrete or contains point masses (i.e.~where $\psi(y)\,dy$ 
is an arbitrary probability measure instead of a probability density function and expectation is interpreted in terms of Lebesgue integration). However, the statement of the 
results in this more general setting becomes more complex due to the non-zero probability of substitutions that are precisely neutral. See also~\ref{sec:additional} for additional 
proofs relating the distribution of fitnesses after one substitution for a population currently fixed at fitness $x$ to the corresponding distribution for a population currently 
fixed at some other fitness $x'$ and a characterization of the behavior of the distribution of fitnesses after one substitution in the limit where the initial fitness $x$ tends to 
either $\infty$ or $-\infty$.

We now provide the proof for Theorem~\ref{th:main}.
\begin{proof}

Consider a non-decreasing function $f$, and let
\begin{align}
E_x(f) &= \int_{-\infty}^{\infty} f(y)\,\frac{Q(y|x)}{k(x)}\, dy\\
E_{\pi}(f)& = \int_{-\infty}^{\infty} f(y)\,\pi(y)\, dy
\end{align}
be finite. The first thing we want to know is the sign of
\begin{equation}
E_{x}(f) - E_{\pi}(f)= \int_{-\infty}^{\infty} \left( f(y) - E_{\pi}(f)\right) \,\frac{Q(y|x)}{k(x)}\,dy,
\end{equation}
where we have used the fact that $Q(y|x)/k(x)\,dy$ is a probability density and therefore integrates to one. Since $k(x)>0$ it thus suffices to determine the sign of
\begin{equation}
\int_{-\infty}^{\infty} \left( f(y) - E_{\pi}(f)\right) \,Q(y|x)\,dy \mbox{.}
\end{equation}
Now, define $z$ to be any fitness such that $f(y)\geq E_{\pi}(f)$ for $y>z$ and $f(y)\leq E_{\pi}(f)$ for $y<z$. Such a fitness surely exists since $f$ is non-decreasing and $
\min f \leq E_{\pi}(f) \leq \max f$. We can then write
\begin{multline}
\int_{-\infty}^{\infty} \left( f(y) - E_{\pi}(f)\right) \,Q(y|x)\,dy\\=\int_{-\infty}^{z} \left( f(y) - E_{\pi}(f)\right) \,N\,\psi(y)\,u(y-x,N)\,dy\\+\int_{z}^{\infty} \left( f(y) - E_{\pi}(f)\right) \,N
\,\psi(y)\,u(y-x,N)\,dy \label{eq:setup}
\end{multline}
where the first integral on the right hand side is non-positive and the second integral is non-negative. Now, \citet{McCandlish13c} have proved that for $N>1$ and $s>s'$ we have
\begin{equation}
\label{eq:discrim1}
\frac{u(s,N)}{u(s',N)}<e^{(N-1)(s-s')}
\end{equation}
and the inequality is reversed for $s<s'$. Thus, for any three fitnesses, $x$, $y$ and $z$ we have
\begin{equation}
\label{seq:probfixineq}
u(y-x,N)
\begin{cases}
<u(z-x,N)\,e^{(N-1)(y-z)} & \mbox{for } y>z\\
=u(z-x,N)\,e^{(N-1)(y-z)} & \mbox{for } y=z\\
>u(z-x,N)\,e^{(N-1)(y-z)} & \mbox{for } y<z\mbox{.}
\end{cases}
\end{equation}
Using this inequality together with Equation~\ref{eq:setup} yields:
\begin{equation}
\label{eq:secondineq}
\begin{split}
\int_{-\infty}^{\infty} \left( f(y) - E_{\pi}(f)\right) \,Q(y|x)\,dy & \leq \int_{-\infty}^{z} \left( f(y) - E_{\pi}(f)\right) \,N\, \psi(y)\, u(z-x,N)\,e^{(N-1)(y-z)}\,dy \\
& \quad +\int_{z}^{\infty} \left( f(y) - E_{\pi}(f)\right)  \,N\, \psi(y)\, u(z-x,N)\,e^{(N-1)(y-z)}\,dy \\
& \leq N \, u(z-x,N)\, e^{-(N-1)z}\int_{-\infty}^{\infty}  \left( f(y) - E_{\pi}(f)\right) \,\psi(y) \,e^{(N-1)y}\,dy \\
& \leq NZ\, u(z-x,N)\, e^{-(N-1)z} \int_{-\infty}^{\infty}  \left( f(y) - E_{\pi}(f)\right) \,\pi(y)\,dy \\
& \leq NZ\, u(z-x,N)\, e^{-(N-1)z}  \left(\int_{-\infty}^{\infty}  f(y)\,\pi(y)\,dy - E_{\pi}(f)\right) \\
& = 0
\end{split}
\end{equation}
where we have also used the definition of $\pi(y)$, the fact that $\pi$ is probability distribution and hence integrates to $1$, and the definition of $E_{\pi}(f)$. This 
demonstrates the second inequality in the statement of the theorem, $\int_{-\infty}^{\infty} f(y)\,\left(Q(y|x)/k(x)\right)\,dy \leq \int_{-\infty}^{\infty} f(y)\,\pi(y)\,dy$.

To see that this inequality is strict if there exist two open intervals of non-zero length within the support of $\psi$ such that $f$ is strictly greater on one of these intervals than it is on the 
other, notice that in the derivation of the first line of Equation~\ref{eq:secondineq} the inequality for the probability of fixation given by Equation~\ref{seq:probfixineq} is 
strict. Thus, the first line of Equation~\ref{eq:secondineq} can only be an equality if both $\int_{-\infty}^{z} \left( f(y) - E_{\pi}(f)\right) \, \psi(y) \,dy=0$ and $\int_{z}^{\infty} 
\left( f(y) - E_{\pi}(f)\right) \, \psi(y) \,dy=0$. However if there exist two open intervals within the support of $\psi$ such that $f$ is strictly greater on one of these intervals 
than it is on the other, then there exists at least one of 1) some open interval $\alpha=(a,b)$ within the support of $\psi$ with $a<b\leq z$ such that $(f(y)-E_{\pi}(f))<0$ for $y\in (a,b)$  or 2) some open 
interval $\beta=(c,d)$ within the support of $\psi$ with $d>c\geq z$ such that $(f(y)-E_{\pi}(f))>0$ for $y\in (c,d)$. Now, $(f(y)-E_{\pi}(f))\leq0$ for $y\leq z$, so that the existence of $\alpha$ implies that $
\int_{-\infty}^{z} \left( f(y) - E_{\pi}(f)\right) \, \psi(y) \,dy<0$. Similarly, $(f(y)-E_{\pi}(f))\geq0$ for $y\geq z$ so that the existence of $\beta$ implies that $\int_{z}^{\infty} 
\left( f(y) - E_{\pi}(f)\right) \, \psi(y) \,dy>0$. Thus in either case the inequality is strict.

To prove the other inequality in the statement of the theorem, we proceed in a similar manner. Let
\begin{equation}
E_{\psi}(f) = \int_{-\infty}^{\infty} f(y)\,\psi(y)\, dy
\end{equation}
be finite.

Now, we want to know the sign of $E_{x}(f)\,dy - E_{\psi}(f)\,dy$, which takes the same sign as 
\begin{equation}
\int_{-\infty}^{\infty} \left( f(y) - E_{\psi}(f)\right) \,Q(y|x)\,dy \mbox{.}
\end{equation}
Choose $z$ to be a fitness such that $f(y)\geq E_{\psi}(f)$ for $y>z$ and $f(y)\leq E_{\psi}(f)$ for $y<z$. Then
\begin{equation}
\label{eq:firstineq}
\begin{split}
\int_{-\infty}^{\infty} \left( f(y) - E_{\psi}(f)\right) \,Q(y|x)\,dy &= \int_{-\infty}^{z} \left( f(y)-E_{\psi}(f) \right) N\, \psi(y) \,u(y-x,N)\, dy \\
 & \quad + \int_{z}^{\infty} \left( f(y)-E_{\psi}(f) \right) N\, \psi(y) \,u(y-x,N)\, dy \\
 & \geq \int_{-\infty}^{\infty} \left( f(y)-E_{\psi}(f) \right) N\, \psi(y) \,u(z-x,N)\, dy \\
 & = 0,
\end{split}
\end{equation}
where we have used the fact that $u(z-x,N) >u(y-x,N)$ for $y<z$ and $u(z-x,N)<u(y-x,N)$ for $y>z$. This proves the first inequality in the statement of the theorem. 

The argument for the strictness of this inequality if there exist two open intervals within the support of $\psi$ such that $f$ is strictly greater on one of these intervals than 
it is on the other is similar to the argument used for the other inequality and hinges on the strictness of the inequalities $u(z-x,N) >u(y-x,N)$ for $y<z$ and $u(z-x,N)<u(y-x,N)
$ for $y>z$ used to derive the second line in Equation~\ref{eq:firstineq} rather than on the strictness of the inequalities in Equation~\ref{seq:probfixineq}.
\end{proof}

\end{bibunit}

\clearpage
\pagebreak

\setcounter{equation}{0}
\setcounter{theorem}{0}

\begin{bibunit}[evolution]

\section*{Supporting Information}

\setcounter{section}{0}

\appendix
\def\thesection{Supporting Information S\arabic{section}}
\renewcommand{\theequation}{S\arabic{equation}}
\renewcommand{\thetheorem}{S\arabic{theorem}}

\section{Additional results on the distribution of fitnesses after one substitution}
\label{sec:additional}

This section of the Supporting Information contains additional results to complement those derived in~\ref{sec:proofmain}. First, we derive several results related to how the distribution of fitnesses after one substitution changes as the initial fitness is varied. Next, we derive conditions for when the expected value of a non-decreasing function is finite with respect to the distribution of fitnesses after one substitution. Finally, we prove that the substitution rate ($k(x)$), expected selection coefficient of the first substitution ($m(x)$), and proportion of the time the first substitution is advantageous ($p(x)$) all change continuously as a function of initial fitness ($x$).

The following result shows that the distribution of fitnesses after one substitution for a population fixed for fitness $x$ is to the left of the distribution of fitnesses after one 
substitution for a population fixed at fitness $x'>x$. The method of proof is similar to the proof of Theorem~\ref{th:main} in~\ref{sec:proofmain} but uses a different 
inequality for the probability of fixation.

\begin{proposition}
Under the house of cards model with a continuous mutational distribution $\psi$  and $N>1$, for any arbitrary initial fitnesses $x$ and $x'$ with $x'>x$ 
and any non-decreasing, real-valued function $f$ defined on the real line, we have
\begin{equation}
 \int_{-\infty}^{\infty} f(y)\,\frac{Q(y|x)}{k(x)}\,dy \leq  \int_{-\infty}^{\infty} f(y)\,\frac{Q(y|x')}{k(x')}\,dy
\end{equation}
whenever the corresponding integrals exist. The above inequality is strict if there exist two open intervals of length greater than zero within the support of $\psi$ such that the value of $f$ is strictly 
greater for members of the first interval than it is for members of the second.
\end{proposition}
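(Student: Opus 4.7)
The plan is to reproduce the structure of the proof of Theorem~\ref{th:main} in~\ref{sec:proofmain}, replacing the pointwise fixation inequality of Equation~\ref{eq:discrim1} with a monotone-likelihood-ratio (MLR) comparison of the Moran fixation probability across two different initial fitnesses. Concretely, what I want to establish is that for any $x'>x$ and $y>z$,
\begin{equation}
\frac{u(y-x,N)}{u(y-x',N)} \;\leq\; \frac{u(z-x,N)}{u(z-x',N)},
\end{equation}
with the reverse inequality for $y<z$, and strict inequality whenever $y\neq z$. This is the one new ingredient beyond Theorem~\ref{th:main}: it says that the ratio $Q(y|x)/Q(y|x')$ is non-increasing in $y$, which is exactly what allows the same splitting argument to go through.

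To prove the MLR property I would show that $g(s) := \tfrac{d}{ds}\log u(s,N)$ is strictly decreasing in $s$ for $N>1$. Differentiating the Moran expression $u(s,N)=(1-e^{-s})/(1-e^{-Ns})$ twice yields
\begin{equation}
g'(s) \;=\; \frac{N^{2}}{4\sinh^{2}(Ns/2)} \;-\; \frac{1}{4\sinh^{2}(s/2)},
\end{equation}
which is negative because $|\sinh(Nt)|>N|\sinh(t)|$ for $N>1$ and $t\neq 0$ (immediate from convexity of $\sinh$ on $(0,\infty)$ and $\sinh(0)=0$). Since $y-x'<y-x$, the derivative of $\log u(y-x',N)-\log u(y-x,N)$ with respect to $y$ is $g(y-x')-g(y-x)>0$, giving the MLR inequality.

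With MLR in hand the rest is a direct transcription of Theorem~\ref{th:main}. Let $E_{x'}(f)=\int f(y)\,Q(y|x')/k(x')\,dy$, and choose $z$ with $f(y)\leq E_{x'}(f)$ for $y<z$ and $f(y)\geq E_{x'}(f)$ for $y>z$. Since $k(x)>0$, the sign of $E_{x}(f)-E_{x'}(f)$ equals the sign of $\int (f(y)-E_{x'}(f))\,Q(y|x)\,dy$. Set $R := u(z-x,N)/u(z-x',N)$. The MLR inequality gives $u(y-x,N)\leq R\,u(y-x',N)$ for $y>z$ and $u(y-x,N)\geq R\,u(y-x',N)$ for $y<z$; in both cases the sign of $f(y)-E_{x'}(f)$ makes the product satisfy $(f(y)-E_{x'}(f))\,u(y-x,N)\leq R\,(f(y)-E_{x'}(f))\,u(y-x',N)$. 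Integrating over both halves and summing yields
\begin{equation}
\int (f(y)-E_{x'}(f))\,Q(y|x)\,dy \;\leq\; R\int (f(y)-E_{x'}(f))\,Q(y|x')\,dy \;=\; 0,
\end{equation}
the right-hand integral vanishing by the definition of $E_{x'}(f)$. Hence $E_{x}(f)\leq E_{x'}(f)$.

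Strictness under the two-interval hypothesis on $f$ follows verbatim from the corresponding strictness argument in Theorem~\ref{th:main}: the MLR inequality is strict for $y\neq z$, so the existence of an open subinterval of $\mathrm{supp}\,\psi$ on which $f(y)-E_{x'}(f)$ is strictly signed (below or above $z$) forces at least one of the two split integrals to be strictly below its $R$-scaled counterpart. The only non-routine step in the whole proof is verifying the MLR property for $u(\cdot,N)$; once that is done the bookkeeping is essentially identical to Theorem~\ref{th:main}, and one could alternatively cite an equivalent total-positivity statement for the Moran fixation probability if available from~\citet{McCandlish13c}.
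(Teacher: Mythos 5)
Your proposal is correct and follows essentially the same route as the paper: the same choice of the threshold $z$ relative to $E_{x'}(f)$, the same splitting of the integral, and the same ratio comparison of fixation probabilities (your MLR inequality is exactly the paper's Equation~\ref{seq:probfix4}, which the paper imports from \citet{McCandlish13c} rather than proving). The only difference is that you supply a self-contained verification of that fixation-probability inequality via the log-derivative $\frac{N^{2}}{4\sinh^{2}(Ns/2)}-\frac{1}{4\sinh^{2}(s/2)}<0$, which is a correct and welcome addition but does not change the structure of the argument.
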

\begin{proof}
Let
\begin{align}
E_x(f) &= \int_{-\infty}^{\infty} f(y)\,\frac{Q(y|x)}{k(x)}\, dy<\infty\\
E_{x'}(f) &= \int_{-\infty}^{\infty} f(y)\,\frac{Q(y|x')}{k(x')}\, dy<\infty.
\end{align}
Also, let $z$ be any fitness such that $f(y) \leq E_{x'}(f)$ for $y<z$ and $f(y) \geq E_{x'}(f)$ for $y>z$. Such a fitness surely exists since $f$ is increasing and $ \min{f}\leq 
E_{x'}(f)\leq \max{f}$.

We are interested in the sign of 
\begin{equation}
E_x(f)-E_{x'}(f)=\int_{-\infty}^{\infty}\left( f(y)-E_{x'}(f)\right)\,\frac{Q(y|x)}{k(x)}\,dy.
\end{equation}
First, we expand
\begin{equation}
\label{eq:suppsplit}
\begin{split}
\int_{-\infty}^{\infty}\left( f(y)-E_{x'}(f)\right)\,\frac{Q(y|x)}{k(x)}\,dy &=\int_{-\infty}^{z}\left( f(y)-E_{x'}(f)\right)\frac{N\, \psi(y)\, u(y-x,N)}{k(x)}\, dy\\
& \quad +\int_{z}^{\infty}\left( f(y)-E_{x'}(f)\right) \frac{N\, \psi(y)\, u(y-x,N)}{k(x)}\, dy,
\end{split}
\end{equation}
where the first term on the right-hand side is non-positive and the second term in non-negative. Now~\citet{McCandlish13c} have shown that for $N>1$, $c>0$ and $s'>s$ 
\begin{equation}
\label{eq:discrim2}
\frac{u(s'+c,N)}{u(s+c,N)}<\frac{u(s',N)}{u(s,N)},
\end{equation}
and that the inequality is reversed for $s'<s$. Thus, for any four 
fitnesses $x$, $x'$, $y$ and $z$ with $x<x'$:
\begin{equation}
\label{seq:probfix4}
\frac{u(y-x,N)}{u(z-x,N)}
\begin{cases}
<\frac{u(y-x',N)}{u(z-x',N)} & \mbox{for } y>z\\
=\frac{u(y-x',N)}{u(z-x',N)} & \mbox{for } y=z\\
>\frac{u(y-x',N)}{u(z-x',N)} & \mbox{for } y<z\mbox{.}
\end{cases}
\end{equation}
Combining this inequality with Equation~\ref{eq:suppsplit} yields:
\begin{equation}
\begin{split}
\int_{-\infty}^{\infty}\left( f(y)-E_{x'}(f)\right)\,\frac{Q(y|x)}{k(x)}\,dy & \leq \int_{-\infty}^{\infty}\left( f(y)-E_{x'}(f)\right)\,N\,\psi(y)\frac{u(z-x,N)}{k(x)}\frac{u(y-x',N)}{u(z-x',N)} \,dy 
\\
 & \leq \frac{u(z-x,N)}{u(z-x',N)} \frac{k(x')}{k(x)}\int_{-\infty}^{\infty}\left( f(y)-E_{x'}(f)\right) N\,\psi(y) \frac{u(y-x',N)}{k(x')}\,dy \\
 & \leq \frac{u(z-x,N)}{u(z-x',N)} \frac{k(x')}{k(x)}\int_{-\infty}^{\infty}\left( f(y)-E_{x'}(f)\right) \frac{Q(y|x')}{k(x')}\,dy \\
 & = 0.
\end{split}
\end{equation}
This demonstrates the inequality in the statement of the proposition. The justification for the condition on the strictness of the inequality is similar to that given in the proof 
of  Theorem~\ref{th:main} except that it relies on the strictness of the inequalities in Equation~\ref{seq:probfix4}.
\end{proof}
This proposition can then be used to show that the expected fitness after one substitution is increasing as a function of the initial fitness, and that, indeed, every quantile of the distribution of fitnesses after one substitution (except for the 0-th and 1-st) is increasing as a function of initial fitness. The proofs of these statements are exactly analogous to the proofs for Corollaries~\ref{cor:expfit},~\ref{cor:cdf} and~\ref{cor:q} given in~\ref{sec:proofmain}.

We also wish to characterize the limit of the distribution of fitnesses after one substitution as the initial fitness becomes extremely large or extremely small. For technical 
reasons, we will conduct this proof in somewhat more generality than the previous proofs in that $\psi(y)\,dy$ will be an arbitrary probability measure and all associated 
integrals will be Lebesgue integrals. Note that in this more general context it is still the case that an equilibrium measure
\begin{equation}
\pi(y) \,dy=\frac{\psi(y)e^{(N-1)y}}{\int_{-\infty}^{\infty}\psi(z)e^{(N-1)z}\, dz} \,dy
\end{equation}
exists for the stochastic process defined by Equation~\ref{eq:master} provided that the normalizing constant $Z=\int_{-\infty}^{\infty}\psi(z)e^{(N-1)z}\, dz$ is finite. We can 
now state the result:

\begin{proposition}
For $N\geq1$, we have
\begin{enumerate}
\item For all fitnesses $y$
\begin{equation}
\lim_{x\rightarrow -\infty} \frac{Q(y|x)}{k(x)} \,dy=\psi(y) \,dy.
\end{equation}
\item If $\pi$ exists, for all fitnesses $y$
\begin{equation}
\lim_{x\rightarrow \infty} \frac{Q(y|x)}{k(x)} \,dy=\pi(y)\,dy.
\end{equation}
\end{enumerate}
\end{proposition}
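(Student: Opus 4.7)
The plan is to view the measure $Q(y|x)/k(x)\,dy$ as the probability measure on fitness whose Radon--Nikodym derivative against $\psi(dy)$ is $u(y-x,N)/\!\int u(z-x,N)\,\psi(dz)$, and then to handle both limits by applying dominated convergence separately to the numerator and to the denominator. Two bounds on the Moran fixation probability drive the whole argument: $u(s,N)\le 1$, since $u$ is a probability, and $u(s,N)\le e^{(N-1)s}$. The second bound follows from inequality~\eqref{eq:discrim1} applied to $s$ and $s'$ with $s'\to-\infty$, together with the asymptotic $u(s',N)\sim e^{(N-1)s'}$ that one reads off directly from the identity $u(s,N)=(1-e^{-s})/(1-e^{-Ns})$.

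For part 1, fix $y$ and let $x\to-\infty$. Then $u(y-x,N)\to 1$ pointwise since $y-x\to\infty$, and the uniform bound $u\le 1$ combined with dominated convergence gives $k(x)=N\!\int u(z-x,N)\,\psi(dz)\to N\!\int \psi(dz)=N$. The ratio $Q(y|x)/k(x)=N\psi(y)u(y-x,N)/k(x)$ therefore tends pointwise to $\psi(y)$, and Scheff\'e's lemma upgrades this pointwise convergence of densities to $L^{1}$, i.e.\ total variation, convergence of the associated probability measures, which is the content of the stated formula.

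For part 2, assume $Z=\int e^{(N-1)z}\,\psi(dz)<\infty$ so that $\pi$ exists, and multiply numerator and denominator of $Q(y|x)/k(x)$ by $e^{(N-1)x}$. Writing $h_x(y)=\psi(y)\,u(y-x,N)\,e^{(N-1)x}$, the bound $u(s,N)\le e^{(N-1)s}$ rearranges to $h_x(y)\le \psi(y)e^{(N-1)y}$, a $\psi$-integrable majorant of total mass $Z$. A direct look at the Moran formula gives the pointwise limit $u(y-x,N)\,e^{(N-1)x}\to e^{(N-1)y}$ as $x\to\infty$, since both $e^{x-y}$ and $e^{N(x-y)}$ dominate numerator and denominator of $u$. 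Dominated convergence then yields $\int h_x(y)\,dy\to Z$ and $h_x(y)\to \psi(y)e^{(N-1)y}$ pointwise, and dividing gives $Q(y|x)/k(x)=h_x(y)/\!\int h_x\to \psi(y)e^{(N-1)y}/Z=\pi(y)$. Scheff\'e once more promotes this to total variation convergence of the measures.

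The only real obstacle is producing a dominator strong enough to make the rescaling in part 2 work; once $u(s,N)\le e^{(N-1)s}$ is in hand, everything else is a routine DCT application plus Scheff\'e. A minor point to keep in mind is that the proposition allows $\psi$ to be an arbitrary probability measure rather than just a density, so the argument is cleanest written in measure-theoretic form throughout. The $N=1$ case is trivial: then $u\equiv 1$, the measure $Q(y|x)/k(x)\,dy$ equals $\psi(dy)$ for every $x$, and $\pi=\psi$, so both limits hold identically.
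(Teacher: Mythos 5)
Your proof is correct and follows essentially the same route as the paper's: dominated convergence applied to $k(x)$ with the dominator $u\leq 1$ for the $x\to-\infty$ limit, and, after multiplying through by $e^{(N-1)x}$, the dominator $\psi(y)e^{(N-1)y}$ (equivalent to the paper's use of the identity $u(y-x,N)=e^{(N-1)(y-x)}u(x-y,N)$ together with $u\leq1$) for the $x\to\infty$ limit. The only addition is your invocation of Scheff\'e's lemma to upgrade pointwise convergence of the densities to total-variation convergence, which slightly strengthens the conclusion but does not change the argument.
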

\begin{proof}
For the first statement in the theorem, we begin by characterizing
\begin{equation}
k(x)=\int_{-\infty}^{\infty}Q(y|x)\,dy=\int_{-\infty}^{\infty}N\, \psi(y)\, u(y-x,N) \,dy.
\end{equation}
Now, $0<u(y-x)\leq 1$ and $\int_{-\infty}^{\infty} \psi(y)\,dy=1<\infty$, so that applying the Lebesgue dominated convergence theorem to move the limit inside the integral 
gives us
\begin{equation}
\begin{split}
\lim_{x\rightarrow -\infty} k(x) &=\int_{-\infty}^{\infty} N\,\psi(y) \lim_{x\rightarrow -\infty} u(y-x,N) \,dy \\
 &=N.
 \end{split}
\end{equation}
where for the second equality have used the fact that $\lim_{x\rightarrow-\infty}u(y-x,N)=1$. Thus
\begin{equation}
\begin{split}
\lim_{x\rightarrow -\infty} \frac{Q(y|x)}{k(x)} \,dy &= \lim_{x\rightarrow -\infty} \frac{N\,\psi(y) u(y-x,N)}{k(x)} dy \\
 &=\lim_{x\rightarrow -\infty} \psi(y) u(y-x,N) dy \\
 &= \psi(y).
\end{split}
\end{equation}
This completes the proof of the first portion of the proposition

To address the second statement in the proposition, where $x\rightarrow \infty$, first observe that
\begin{equation}
\label{eq:rewriterate}
Q(y|x)\, dy = e^{-(N-1)x}\,N\, \psi(y)\, e^{(N-1)y}\, u(x-y,N)\, dy,
\end{equation}
where we have used the identity $u(y-x,N)=e^{(N-1)(y-x)}u(x-y,N)$. Then
\begin{equation}
\begin{split}
\lim_{x\rightarrow \infty}\, k(x)\, e^{(N-1)x} &= \lim_{x\rightarrow \infty}e^{(N-1)x} \int_{-\infty}^{\infty}Q(y|x)\, dy \\
 &= N\,\lim_{x\rightarrow \infty} \int_{-\infty}^{\infty} \psi(y)\, e^{(N-1)y}\, u(x-y,N)\, dy,
\end{split}
\end{equation}
provided that the integral in the last line is finite, which it is since $0<u(y-x,N)\leq1$ and $\int_{-\infty}^{\infty} \psi(y)\,e^{(N-1)y}\,dy=Z<\infty$ by assumption. The two last facts 
also allow us to apply the Lebesgue dominated convergence theorem to move the limit inside the integral, which together with the fact that $\lim_{x\rightarrow 
\infty} u(x-y,N)=1$ yields
\begin{equation}
\lim_{x\rightarrow \infty}k(x)\,e^{(N-1)x}=N\, Z.
\end{equation}
Using this last result and equation~\ref{eq:rewriterate}, we then have
\begin{equation}
\begin{split}
\lim_{x\rightarrow \infty} \frac{Q(y|x)}{k(x)} dy &= \lim_{x\rightarrow \infty} \frac{N\, \psi(y)\, e^{(N-1)y}\,u(x-y,N)}{k(x)\,e^{(N-1)x}}\,dy\\
 &=\frac{N\,\psi(y) e^{(N-1)y}}{N\,Z} \,dy\\
 &= \pi(y)\,dy,
\end{split}
\end{equation}
as required.
\end{proof}

Because many of our results describe inequalities for the expected value of a non-decreasing function $f$ with respect to $\psi$, $\pi$, and the distribution of fitnesses after one substitution, it is helpful to characterize when these expectations exist. The following result shows that if this expectation is finite with respect to either $\psi$ or $\pi$, then it is also finite with respect to the distribution of fitnesses after one substitution:

\begin{proposition}
Under the house of cards model with a continuous mutational distribution $\psi$, $N>1$, and any non-
decreasing, real-valued function $f$ defined on the real line, we have
\begin{enumerate}
\item If $\int_{-\infty}^{\infty} f(y)\,\frac{Q(y|x)}{k(x)}\,dy$ is finite for any $x$, then it is finite for all $x$.
\item If $\int_{-\infty}^{\infty} f(y)\,\psi(y)\,dy$ is finite, then $\int_{-\infty}^{\infty} f(y)\,\frac{Q(y|x)}{k(x)}\,dy$ is finite for all $x$.
\item If $\int_{-\infty}^{\infty} f(y)\,\pi(y)\,dy$ is finite, then $\int_{-\infty}^{\infty} f(y)\,\frac{Q(y|x)}{k(x)}\,dy$ is finite for all $x$.
\end{enumerate}
\end{proposition}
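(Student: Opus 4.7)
The plan is to reduce each of the three finiteness claims to a pointwise majorization of the density $Q(y|x)/k(x)$ by a constant multiple of a reference probability density. Once such a uniform bound is in hand, finiteness transfers by monotone comparison applied separately to $f^{+}=\max(f,0)$ and $f^{-}=\max(-f,0)$, so no sign assumption on $f$ is needed; the non-decreasing hypothesis enters only through the convention that ``finite'' means the integrals of $f^{+}$ and $f^{-}$ are both finite. In all three cases the work reduces to producing the right pointwise bound on $u(y-x,N)$.

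For Case~(2), I would simply use $u(s,N)\le 1$, which gives $Q(y|x)/k(x)\le (N/k(x))\,\psi(y)$, so finiteness of $\int f\,\psi\,dy$ forces finiteness of $\int f\,Q(y|x)/k(x)\,dy$. For Case~(3), the key ingredient is the identity $u(s,N)=e^{(N-1)s}\,u(-s,N)$, immediate from the formula $u(s,N)=(1-e^{-s})/(1-e^{-Ns})$; combined with $u(-s,N)\le 1$ it gives $u(y-x,N)\le e^{(N-1)(y-x)}$. Multiplying by $N\psi(y)$ and recognizing $\psi(y)\,e^{(N-1)y}=Z\,\pi(y)$ yields $Q(y|x)/k(x)\le (NZ\,e^{-(N-1)x}/k(x))\,\pi(y)$, and finiteness transfers from $\pi$.

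Case~(1) is the only genuinely nontrivial one, and it is where I would invoke the already-proved Equation~\ref{eq:discrim1}: for $N>1$ and $s>s'$, $u(s,N)/u(s',N)<e^{(N-1)(s-s')}$. Fix some $x_0$ at which finiteness is assumed and let $x$ be arbitrary. If $x<x_0$, setting $s=y-x$ and $s'=y-x_0$ in Equation~\ref{eq:discrim1} gives the uniform-in-$y$ bound $u(y-x,N)/u(y-x_0,N)<e^{(N-1)(x_0-x)}$. If $x\ge x_0$, monotonicity of $u$ in its first argument gives the same ratio bounded by $1$. Either way there is a constant $C=C(x,x_0,N)$, independent of $y$, with $Q(y|x)/k(x)\le (C\,k(x_0)/k(x))\,Q(y|x_0)/k(x_0)$, and since $k(x)>0$ everywhere, finiteness at $x_0$ forces finiteness at $x$.

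The main obstacle is recognizing that~(1) is not a consequence of~(2) and~(3): those require $\int f\,\psi\,dy$ or $\int f\,\pi\,dy$ to be finite, and both can fail even when $\int f\,Q(y|x_0)/k(x_0)\,dy$ is finite. What makes~(1) go through is precisely the uniform-in-$y$ control of the ratio $u(y-x,N)/u(y-x_0,N)$ supplied by Equation~\ref{eq:discrim1}, the same inequality that drove the proof of Theorem~\ref{th:main}; the rest is routine bookkeeping.
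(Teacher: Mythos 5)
Your proof is correct. Parts (2) and (3) follow essentially the same route as the paper: the paper likewise uses $u(s,N)\le 1$ to compare against $\psi$, and $u(s,N)<e^{(N-1)s}$ to compare against $\pi$ (though it carries out the latter only at the special point $z$ where $f$ changes sign and then invokes part (1), whereas your $f^{+}/f^{-}$ split lets you bound at an arbitrary $x$ directly and dispenses with the paper's ``without loss of generality $f$ takes both signs'' reduction). The genuine divergence is in part (1). The paper argues via the limit comparison test: it computes $\lim_{y\to\infty}u(y-x',N)/u(y-x,N)=1$ and $\lim_{y\to-\infty}u(y-x',N)/u(y-x,N)=e^{(N-1)(x'-x)}$, and compares the two improper integrals separately on $(z,\infty)$ and $(-\infty,z)$. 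You instead extract a uniform-in-$y$ bound on the same ratio from Equation~\ref{eq:discrim1} together with the monotonicity of $u$ in $s$, arriving at the pointwise domination $Q(y|x)\le\max\left(1,e^{(N-1)(x_{0}-x)}\right)Q(y|x_{0})$. Your version buys an explicit constant and sidesteps the positivity caveats that the limit comparison test technically requires (where $\psi$ or $f$ vanishes the ratio in the paper's argument is $0/0$), and as you note it works for any measurable $f$; the paper's version needs only the asymptotics of $u$ rather than the sharper ratio inequality, but since Equation~\ref{eq:discrim1} is already established for Theorem~\ref{th:main} that is no saving. Your closing observation that (1) is not subsumed by (2) and (3) is correct and is implicitly why the paper states (1) separately and even leans on it to prove (3).
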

\begin{proof}
If $f$ is constant on the interior of the support of $\psi$, then we are done, since then all the above integrals take this constant values. If $f$ is not constant on the interior of the support of $\psi$, without loss of generality we can assume that $f$ takes both positive and negative values on the interior of the support of $\psi$ since otherwise we can simply consider a function $f^{*}(y)=f(y)-f(w)$ for some $w$ such that $w$ is in the interior of the support of $\psi$, which alters the values of the above integrals by $f(w)$ but does not affect whether the integrals are finite.

Let $z$ be a constant such that $f(y)\geq0$ for $y\geq z$ and $f(y)\leq 0 $ for $y\leq z$. To demonstrate the first claim, note that because $0<k(x)<N$ for all $x$, it suffices to show that $\int_{-\infty}^{\infty}f(y)\,Q(y|x')\,dy$ is finite given that $\int_{-\infty}^{\infty}f(y)\,Q(y|x)\,dy$ is finite for some $x$. Because we will demonstrate the finiteness of these integrals using the limit comparison test, we must make a separate argument for each limit of these improper integrals. We thus consider  \linebreak
$\int_{z}^{\infty}f(y)\,Q(y|x')\,dy$ and $\int_{-\infty}^{z}f(y)\,Q(y|x')\,dy$ separately.

Now, $\lim_{s\rightarrow\infty} u(s,N)=1$ and $\lim_{s\rightarrow-\infty} u(s,N)\,e^{(N-1)s}=1$ so that
\begin{equation}
\lim_{y\rightarrow \infty} \frac{u(y-x',N)}{u(y-x,N)}=1
\end{equation}
and
\begin{equation}
\begin{split}
\lim_{y\rightarrow -\infty} \frac{u(y-x',N)}{u(y-x,N)} &= \lim_{y\rightarrow -\infty} \frac{u(y-x',N)\,e^{(N-1)(y-x')}}{u(y-x,N)\,e^{(N-1)(y-x)}}\,e^{(N-1)(x'-x)} \\
&=e^{(N-1)(x'-x)}.
\end{split}
\end{equation}
Thus, we have
\begin{equation}
\lim_{y\rightarrow \infty} \frac{f(y)\,Q(y|x')}{f(y)\,Q(y|x)}=1>0,
\end{equation}
which shows that $\int_{z}^{\infty}f(y)\,Q(y|x')\,dy$ is finite. Similarly, we have 
\begin{equation}
\lim_{y\rightarrow -\infty} \frac{f(y)\,Q(y|x')}{f(y)\,Q(y|x)}=e^{(N-1)(x'-x)}>0,
\end{equation}
which shows that $\int_{-\infty}^{z}f(y)\,Q(y|x')\,dy$ is finite.

To show the second claim, note that because $0<k(x)<N$ for all $x$, it suffices to show that both $\int_{-\infty}^{z} f(y)\,\psi(y) \,u(y-x,N)\,dy$ and $\int_{z}^{\infty} f(y)\,\psi(y) \,u(y-x,N)\,dy$ are finite. This result follows because $0<u(y-x,N)<1$ and we are given that $\int_{-\infty}^{z} f(y)\,\psi(y) \,dy$ and $\int_{z}^{\infty} f(y)\,\psi(y) \,dy$ are finite.

To show the third claim, it suffices to show that  $\int_{z}^{\infty} f(y)\,\psi(y) \,u(y-z,N)\,dy$ and \linebreak $\int_{-\infty}^{z} f(y)\,\psi(y) \,u(y-z,N)\,dy$ are finite, since then $\int_{-\infty}^{\infty} f(y)\,\left(Q(y|x)/k(x) \right)\,dy$ is finite for all $x$ by the first claim. To show that these integrals are finite, we note that $0<u(s,N) < e^{(N-1)s}$ for $N>1$~\citep{McCandlish13c} so that
\begin{equation}
0 \leq \int_{z}^{\infty}f(y)\, \psi(y)\,u(y-z,N)\,dy \leq e^{-(N-1)z} \int_{z}^{\infty}f(y)\, \psi(y)\,e^{(N-1)y}\,dy<\infty
\end{equation}
and
\begin{equation}
0 \geq \int_{-\infty}^{z}f(y)\, \psi(y)\,u(y-z,N)\,dy \geq e^{-(N-1)z} \int_{-\infty}^{z}f(y)\, \psi(y)\,e^{(N-1)y}\,dy>-\infty,
\end{equation}
as required.
\end{proof}

Finally, for \ref{sec:proofmain}, we also need proofs that $m(x)$ and $p(x)$ are continuous functions of $x$. First we need to prove some simpler results. We begin by proving that 
the substitution rate is continuous with respect to initial fitness.
\begin{proposition}
\label{prop:k}
$k(x)$ is continuous in $x$.
\end{proposition}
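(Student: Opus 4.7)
The plan is to apply the Lebesgue dominated convergence theorem directly to the defining integral
\[
k(x)=\int_{-\infty}^{\infty} N\,\psi(y)\,u(y-x,N)\,dy,
\]
moving the limit in $x$ inside the integral. The argument should reduce to two easy ingredients: pointwise continuity of the integrand in $x$ for each fixed $y$, and an integrable dominating function independent of $x$.

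First, I would fix an arbitrary sequence $x_n \to x$. For each fixed $y$, the fixation probability
\[
u(s,N)=\frac{1-e^{-s}}{1-e^{-Ns}}
\]
is continuous in $s$ (with the removable singularity at $s=0$ assigned the limit value $1/N$), so $N\psi(y)\,u(y-x_n,N) \to N\psi(y)\,u(y-x,N)$ as $n\to\infty$. This handles pointwise convergence of the integrands.

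Next, I would establish the uniform bound $0<u(s,N)\le 1$ for all $s$ when $N\ge 1$. For $s>0$ both numerator and denominator of $u(s,N)$ are positive, and since $N\geq 1$ the denominator is at least as large as the numerator; for $s<0$ both are negative, with the denominator having the larger absolute value; at $s=0$ the value is $1/N\le 1$. This was already noted implicitly in the paper (and is an immediate consequence of the stronger bound $u(s,N)<e^{(N-1)s}$ cited from~\citet{McCandlish13c} together with $u(s,N)<1$ for $s>0$). Hence $N\psi(y)\,u(y-x_n,N)\le N\psi(y)$ for every $n$ and every $y$, and $\int_{-\infty}^{\infty} N\psi(y)\,dy = N<\infty$ since $\psi$ is a probability density.

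With pointwise convergence and an integrable majorant in hand, dominated convergence gives $k(x_n)\to k(x)$, proving continuity. I do not anticipate a serious obstacle here: both the pointwise continuity of $u(\cdot,N)$ and the uniform bound $u\le 1$ are elementary, and the integrable majorant $N\psi$ is immediate. The only mild subtlety is to note that $u$ extends continuously across $s=0$, which is standard.
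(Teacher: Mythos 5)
Your proof is correct, but it takes a genuinely different route from the paper's. You establish plain continuity via dominated convergence: pointwise continuity of $u(\cdot,N)$ in $s$ (including across the removable singularity at $s=0$) together with the uniform bound $0<u(s,N)\leq 1$, which yields the integrable majorant $N\psi(y)$. The paper instead uses the derivative bound $0<\frac{d}{ds}u(s,N)<1$ to obtain $|u(y-x,N)-u(y-x',N)|\leq|x-x'|$ and hence the quantitative estimate $|k(x)-k(x')|\leq N|x-x'|$, i.e.\ Lipschitz continuity with explicit constant $N$. The trade-off is that your argument is softer and needs less about $u$ (no monotonicity or derivative control, only boundedness and pointwise continuity, so it would generalize to fixation probabilities for which the Lipschitz bound is unavailable), whereas the paper's estimate is strictly stronger and, more importantly, is reused verbatim in the subsequent results: the lemma on the continuity of $\int y\,Q(y|x)\,dy$ and the proposition on the continuity of $p(x)$ both rely on the pointwise inequality $|u(y-x,N)-u(y-x',N)|\leq|x-x'|$ rather than on the bare continuity of $k$. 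So your proof suffices for this proposition in isolation, but if one follows your route one would still need to derive the Lipschitz bound (or rerun a dominated-convergence argument) for those later statements.
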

\begin{proof}
First note that $k(x)$ exists for all $x$ because $0<u(s,N)\leq1$ so that $0<k(x)\leq N$. We want to show that for any $\epsilon>0$, there exists a $\delta>0$ such that if 
$0<|x-x'|<\delta$ then $|k(x)-k(x')|<\epsilon$. In fact, in this case we can choose $\delta< \epsilon/N$. First, we note that $u(s,N)$ is continuous and $0<\frac{d}{ds}u(s,N)
<1$\citep{McCandlish13c}, so that $0>\frac{d}{dx}u(y-x,N)>-1$, and thus $|u(y-x,N)-u(y-x',N)|<|x-x'|$. We then have
\begin{equation}
\begin{split}
|k(x)-k(x')| &= N \int_{-\infty}^{\infty}|u(y-x,N)-u(y-x',N)|\,\psi(y)\,dy \\ 
 &< N |x-x'|\int_{-\infty}^{\infty} \,\psi(y)\,dy \\
 &< \epsilon,
\end{split}
\end{equation}
as required. 
\end{proof}
Note that since $\int_{-\infty}^{\infty} \,\psi(y)\,dy=1$, the above proof has in fact demonstrated that $|k(x)-k(x')|\leq N|x-x'|$ for all $x, x'$. Thus $k(x)$ is not merely continuous, but is rather Lipschitz continuous with Lipschitz constant $N$.

Next, we prove a technical lemma.

\begin{lemma}
\label{lemma:qcont}
If $\psi$ or $\pi$ has a finite mean, then $\int_{-\infty}^{\infty} y\,Q(y|x)\,dy$ is a continuous function of $x$.
\end{lemma}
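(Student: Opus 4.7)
The plan is to prove continuity by dominated convergence applied to the integrand $y\,Q(y|x) = N\,y\,\psi(y)\,u(y-x,N)$. Setting $g(x) := \int_{-\infty}^{\infty} y\,Q(y|x)\,dy$, since $u(s,N)$ is continuous in $s$, for any sequence $x_n \to x$ the integrands converge pointwise to $y\,Q(y|x)$, so the task reduces to exhibiting an integrable dominating function for $|y|\,\psi(y)\,u(y-x_n,N)$ that is valid for all $x_n$ in some neighborhood of $x$.

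First I would dispose of the easy case that $\psi$ has finite mean. Since $0 < u(s,N) \le 1$ for every $s$, we obtain the $x$-independent bound $|y|\,\psi(y)$, which is integrable by assumption. DCT then gives $g(x_n)\to g(x)$, and in fact yields continuity without even needing to restrict $x_n$ to a neighborhood.

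The harder case is when only $\pi$ is assumed to have finite mean. Here I would exploit the elementary identity
\[
u(y-x,N) \;=\; e^{(N-1)(y-x)}\,u(x-y,N),
\]
which follows directly from the closed form of $u$. Combining this with the definition $\pi(y) = \psi(y)\,e^{(N-1)y}/Z$ rewrites
\[
g(x) \;=\; N Z\, e^{-(N-1)x} \int_{-\infty}^{\infty} y\,\pi(y)\,u(x-y,N)\,dy.
\]
Now the bound $0 < u(x-y,N) \le 1$ yields the $x$-independent dominator $|y|\,\pi(y)$, which is integrable by hypothesis. DCT applied to the integral gives continuity of $\int y\,\pi(y)\,u(x-y,N)\,dy$ in $x$, and the prefactor $NZ\,e^{-(N-1)x}$ is manifestly continuous, so their product $g(x)$ is continuous.

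The main obstacle is the second case: when $\psi$ has infinite absolute first moment, no bound of the form $(\text{constant})\cdot|y|\,\psi(y)$ is integrable, so the naive DCT argument collapses. The maneuver that rescues things is the identity for $u$ displayed above, which lets us trade integration against $\psi$ for integration against $\pi$, paying only an $x$-dependent but continuous multiplicative factor. Once that reformulation is in hand, the DCT step is routine and the two cases combine to give continuity under either hypothesis.
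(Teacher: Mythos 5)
Your proof is correct. Both cases go through: when $\psi$ has a finite absolute first moment the bound $0<u(y-x,N)\leq 1$ gives the $x$-independent dominating function $|y|\,\psi(y)$, and when only $\pi$ does, the identity $u(s,N)=e^{(N-1)s}\,u(-s,N)$ correctly rewrites $\int y\,Q(y|x)\,dy$ as $NZ\,e^{-(N-1)x}\int y\,\pi(y)\,u(x-y,N)\,dy$, after which $|y|\,\pi(y)$ dominates and the continuous prefactor is harmless. The structural skeleton is the same as the paper's: the same two-case split and, crucially, the same identity for $u$ that trades integration against $\psi$ for integration against $\pi$. Where you differ is in how continuity is extracted. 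The paper avoids dominated convergence entirely and instead runs an explicit $\epsilon$--$\delta$ argument using the quantitative bounds $|u(y-x,N)-u(y-x',N)|\leq |x-x'|$ (from $0<\frac{d}{ds}u(s,N)<1$) in the first case, and $\bigl|\frac{d}{dx}\,e^{-(N-1)x}u(x-y,N)\bigr|\leq N\,e^{-(N-1)x}$ in the second. That buys a stronger conclusion -- the paper's estimates show the function is in fact locally Lipschitz, with explicit constants $N\int|y|\,\psi(y)\,dy$ and $N^{2}Z\,e^{-(N-1)x}\int|y|\,\pi(y)\,dy$ respectively -- whereas your DCT argument is softer but shorter and requires no differentiability information about $u$ beyond its continuity. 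For the purpose the lemma serves in the paper (establishing continuity of $m(x)$), either version suffices.
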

\begin{proof}
If $\psi$ has finite mean, then $\int_{-\infty}^{\infty}|y|\,\psi(y)\,dy<\infty$. For $\epsilon>0$, choose $0<\delta<\epsilon/\left(N\int_{-\infty}^{\infty}|y|\,\psi(y)\,dy\right)$. 
Noting again that $|u(y-x,N)-u(y-x',N)|\leq|x-x'|$, for $0<|x-x'|<\delta$ we have
\begin{equation}
\begin{split}
\left|\int_{-\infty}^{\infty} y\,Q(y|x)\,dy-\int_{-\infty}^{\infty} y\,Q(y|x')\,dy\right| &\leq N \int_{-\infty}^{\infty}|y|\,|u(y-x,N)-u(y-x',N)|\,\psi(y)\,dy \\ 
 &\leq N |x-x'|\int_{-\infty}^{\infty} \,|y|\,\psi(y)\,dy \\
 &< \epsilon,
\end{split}
\end{equation}
as required.

Now, suppose $\pi$ has finite mean so that $\int_{-\infty}^{\infty}|y|\,\pi(y)\,dy<\infty$. We first note that:
\begin{equation}
\begin{split}
\left|\frac{d}{dx}\,e^{-(N-1)x}\,u(x-y,N)\right|&=\left|-(N-1)\,e^{-(N-1)x}\,u(x-y,N)+e^{-(N-1)x}\,\frac{d}{dx} u(x-y,N)\right| \\
 &\leq \left| \vphantom{\frac{d}{dx}}(N-1)\,e^{-(N-1)x}\,u(x-y,N)\right|+\left| e^{-(N-1)x}\,\frac{d}{dx} u(x-y,N)\right| \\
 &\leq N\,e^{-(N-1)x}
\end{split}
\end{equation}
where we have used the facts that $0<u(x-y,N)\leq1$ and $0\leq\frac{d}{dx} u(x-y,N)\leq1$. Thus, for all $x,x'$, 
\begin{equation}
\left| e^{-(N-1)x}\,u(x-y,N)-e^{-(N-1)x'}\,u(x'-y,N)\right| \leq N\,e^{-(N-1) \min(x,x')}\left|x-x'\right|.
\end{equation}

Now, for any given choice of $x$ and choice of $\epsilon>0$, pick $\delta>0$ such that $\delta\,e^{(N-1)\delta}<\epsilon/\left(e^{-(N-1)x}\,N^{2}\,Z\,\int_{-\infty}^{\infty} |y|\,\pi(y)\,dy \right)$, which is possible 
since $\delta\, e^{(N-1)\,\delta}$ can be made arbitrarily small. Using the identities $u(s,N)=e^{(N-1)s}u(-s,N)$ and $\pi(y)=e^{(N-1)y}\psi(y)/Z$, for $0<|x-x'|<\delta$ we have
\begin{equation}
\begin{split}
\lefteqn{\left|\int_{-\infty}^{\infty} y\,Q(y|x)\,dy-\int_{-\infty}^{\infty} y\,Q(y|x')\,dy\right|} \quad \quad &\\ 
 &= N \left| \int_{-\infty}^{\infty} y\left( u(y-x,N)-u(y-x',N)\right) \psi(y)\,dy\right| \\
 &= N\,Z \left| \int_{-\infty}^{\infty} y\left( u(y-x,N)-u(y-x',N)\right) e^{-(N-1)y}\,\pi(y)\,dy\right| \\ 
 &= N\,Z \left| \int_{-\infty}^{\infty} y\left( e^{-(N-1)x}u(x-y,N)-e^{-(N-1)x'}u(x'-y,N)\right)\pi(y)\,dy\right| \\
 &\leq \left|x-x'\right|\,e^{-(N-1) \min(x,x')} N^{2}\,Z \int_{-\infty}^{\infty} |y|\,\pi(y)\,dy \\
 &\leq \left|x-x'\right| \,e^{(N-1)|x-x'|}\,e^{-(N-1)x}\,N^{2}\,Z \int_{-\infty}^{\infty} |y|\,\pi(y)\,dy \\
 &<\epsilon,
\end{split}
\end{equation}
as required.
\end{proof}

Together, these results allow us to address the continuity of $m(x)$.

\begin{proposition}
If $\psi$ or $\pi$ has finite mean then $m(x)$ is a continuous function of $x$.
\end{proposition}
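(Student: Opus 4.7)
The plan is to write $m(x)$ explicitly as a combination of objects whose continuity has already been (or can easily be) established, and then appeal to standard facts about sums, quotients, and constants.

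First I would write
\begin{equation}
m(x)=\int_{-\infty}^{\infty} (y-x)\,\frac{Q(y|x)}{k(x)}\,dy=\frac{1}{k(x)}\int_{-\infty}^{\infty} y\,Q(y|x)\,dy-x,
\end{equation}
using the fact that $Q(y|x)/k(x)\,dy$ is a probability density so its integral against the constant $x$ gives $x$. Before using this identity I need to know that $m(x)$ is actually finite; this follows from the third proposition proved earlier in this appendix, which states that if either $\psi$ or $\pi$ has finite mean then $\int y\,(Q(y|x)/k(x))\,dy$ is finite for all $x$ (apply it to $f(y)=y^{+}$ and $f(y)=-y^{-}$, or simply to $f(y)=y$).

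Given finiteness, continuity is then a three-line argument. The numerator $\int y\,Q(y|x)\,dy$ is a continuous function of $x$ by Lemma~\ref{lemma:qcont}. The denominator $k(x)$ is continuous (in fact Lipschitz) by Proposition~\ref{prop:k}, and since $0<u(s,N)\le 1$ we have $0<k(x)\le N$ everywhere, so $k(x)$ never vanishes and $1/k(x)$ is also continuous. Hence $m(x)$ is a sum of continuous functions (the ratio of a continuous function by a nonvanishing continuous function, minus the identity function $x$), and is therefore continuous.

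I do not expect any genuine obstacle: all the analytic work sits in Proposition~\ref{prop:k} and Lemma~\ref{lemma:qcont}, which have already been done. The only subtlety worth flagging is the finiteness step, since the theorem's hypothesis is that \emph{one} of $\psi$ or $\pi$ has finite mean rather than both; but the earlier finiteness proposition covers exactly this case, so the argument goes through uniformly.
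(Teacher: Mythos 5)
Your proof is correct and follows essentially the same route as the paper's: the same decomposition $m(x)=\frac{1}{k(x)}\int y\,Q(y|x)\,dy-x$, with continuity supplied by Lemma~\ref{lemma:qcont} and Proposition~\ref{prop:k} together with the strict positivity of $k(x)$. Your extra remark on finiteness via the earlier proposition is a small, welcome addition but does not change the argument.
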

\begin{proof}
Because
\begin{equation}
m(x)=\frac{\int_{-\infty}^{\infty}y\,Q(y|x)\,dy}{k(x)}-x,
\end{equation}
the continuity of $m(x)$ follows immediately from the continuity of $\int_{-\infty}^{\infty}y\,Q(y|x)\,dy$ (Lemma~\ref{lemma:qcont}), the continuity of $k(x)$ (Proposition~
\ref{prop:k}), the fact that $k(x)$ is strictly positive, and the continuity of $x$.
\end{proof}

Finally, we address the continuity of $p(x)$.

\begin{proposition}
$p(x)$ is continuous in $x$.
\end{proposition}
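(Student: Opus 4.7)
The plan is to express $p(x)$ as the ratio
\[
p(x)=\frac{A(x)}{k(x)},\qquad A(x):=\int_{x}^{\infty} N\,\psi(y)\,u(y-x,N)\,dy,
\]
where $A(x)$ is the advantageous-substitution rate at fitness $x$. Since Proposition~\ref{prop:k} already establishes that $k(x)$ is continuous, and $k(x)>0$ for every $x$, it will suffice to prove that $A(x)$ is continuous and then invoke continuity of quotients.

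The new wrinkle, relative to the earlier continuity arguments for $k(x)$ and for $\int_{-\infty}^{\infty} y\,Q(y|x)\,dy$ in Lemma~\ref{lemma:qcont}, is that $x$ now appears both in the integrand \emph{and} in the lower limit of integration. I would control the two effects separately via the decomposition, valid for $x<x'$,
\[
A(x)-A(x')=\int_{x}^{x'} N\,\psi(y)\,u(y-x,N)\,dy+\int_{x'}^{\infty} N\,\psi(y)\bigl[u(y-x,N)-u(y-x',N)\bigr]\,dy.
\]
The second term is at most $N|x-x'|$ by the Lipschitz bound $|u(y-x,N)-u(y-x',N)|\le|x-x'|$ that was extracted in the proof of Proposition~\ref{prop:k}, combined with $\int\psi(y)\,dy=1$. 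The first term is at most $N\int_{x}^{x'}\psi(y)\,dy$ because $0<u\le 1$, and it tends to zero as $x'\to x$ because $\psi$ is a continuous probability distribution, so the measure it assigns to an interval of vanishing length vanishes. A symmetric estimate handles $x'<x$, and together they give continuity of $A$.

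The step I expect to be least routine is justifying that the boundary contribution $\int_{x}^{x'}\psi(y)\,dy$ vanishes with $|x'-x|$; this relies only on the absolute continuity of $\psi$ with respect to Lebesgue measure, which is implicit in the standing assumption that $\psi$ is a continuous probability distribution with density $\psi(y)$. Once both pieces of the decomposition are bounded, continuity of $p(x)=A(x)/k(x)$ follows from the continuity and strict positivity of $k(x)$ together with the just-established continuity of $A(x)$.
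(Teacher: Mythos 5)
Your proposal is correct and follows essentially the same route as the paper's proof: both reduce to showing the advantageous-substitution rate $\int_{x}^{\infty}Q(y|x)\,dy$ is continuous, split the difference into a boundary piece controlled by $N\int_{x}^{x'}\psi(y)\,dy$ (vanishing because $\psi$ has no point masses) and an interior piece controlled by the Lipschitz bound $|u(y-x,N)-u(y-x',N)|\leq|x-x'|$, and then conclude via the continuity and strict positivity of $k(x)$. The only cosmetic difference is where you place the split point ($x'$ versus the paper's use of $\min(x,x')$); the substance is identical.
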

\begin{proof}
Because
\begin{equation}
p(x)=\frac{\int_{x}^{\infty}Q(y|x)\,dy}{k(x)}
\end{equation}
and $k(x)$ is both continuous (Proposition~\ref{prop:k}) and strictly positive, it suffices to show that $\int_{x}^{\infty}Q(y|x)\,dy$ is continuous in $x$.

Now, for $\epsilon>0$, choose $\delta>0$ such that $\delta<\epsilon/(2N)$ and $\int_{x-\delta}^{x+\delta} \psi(y)\,dy<\epsilon/(2N)$. It is always possible to choose an appropriate $\delta$ to satisfy the second inequality because we have assumed that $\psi(x)$ is a continuous probability distribution, i.e.~it has no point masses. Thus the total probability in an open ball of radius $\delta$ around $x$ can be made arbitrarily small by choosing a sufficiently small $\delta$. 

Assuming that $|x-x'|<\delta$ we have
\begin{equation}
\label{eq:p1}
\begin{split}
\left| \int_{x}^{\infty} Q(y|x)\,dy-\int_{x'}^{\infty} Q(y|x')\,dy \right| &\leq \left| \int_{x}^{\infty} Q(y|x)\,dy-\int_{x}^{\infty} Q(y|x')\,dy \right|\\
 &\quad \quad+\left|\int_{x}^{x'}Q(y|\min(x,x'))\,dy\right|.
\end{split}
\end{equation}
To analyze the first term on the right-hand side, we again use the fact that $|u(y-x,N)-u(y-x',N)|\leq|x-x'|$, so that for $0<|x-x'|<\delta$ we have
\begin{equation}
\label{eq:p2}
\begin{split}
\left| \int_{x}^{\infty} Q(y|x)\,dy-\int_{x}^{\infty} Q(y|x')\,dy \right| &\leq N\, |x-x'|\int_{x}^{\infty}\psi(y)\,dy \\
 &<\epsilon/2.
\end{split}
\end{equation}
To analyze the other term, we note that $0\leq u(s,N)\leq 1$ so that for $0<|x-x'|<\delta$ we have
\begin{equation}
\label{eq:p3}
\begin{split}
\left|\int_{x}^{x'}Q(y|\min(x,x'))\,dy\right| & \leq N \left|\int_{x}^{x'} \psi(y)\,dy \right| \\
& \leq N \left|\int_{x-\delta}^{x+\delta} \psi(y)\,dy \right| \\
 & < \epsilon/2
\end{split}
\end{equation}
Substituting Equation~\ref{eq:p2} and Equation~\ref{eq:p3} into Equation~\ref{eq:p1} yields the desired result.
\end{proof}

\section{The Gaussian HOC}
\label{sec:GuassianProofs}
In the main text, we claim that for the Gaussian HOC,
\begin{equation}
\frac{Q(x^{*}+z|x^{*}+c)}{k(x^{*}+c)}\,dz=\frac{Q(x^{*}-z|x^{*}-c)}{k(x^{*}-c)}\,dz
\end{equation}
for all $c$ and $z$, where $x^{*}=\mu+(N-1)\sigma^2/2$. To demonstrate this equality, it is sufficient to show that $
{Q(x^{*}+z|x^{*}+c)}\propto{Q(x^{*}-z|x^{*}-c)}$ when viewed as a function of $z$ (i.e.~with $c$ and $x^{*}$ held constant), since two probability density functions that are 
proportional to each other must be equal. Noting that for the Gaussian HOC $\pi(x^{*}+z)=\psi(x^{*}-z)$, and that more generally $u(s,N)=u(-s,N)\,e^{(N-1)s}$, and $\pi(y)\propto 
\psi(y)e^{(N-1)y}$, we have
\begin{equation}
\begin{split}
{Q(x^{*}+z|x^{*}+c)} &= N\, \psi(x^{*}+z)\, u((x^{*}+z)-(x^{*}+c),N) \\
 &=N\, \psi(x^{*}+z)\, u(z-c,N) \\
 &=N\, \pi(x^{*}-z)\, u(c-z,N)\, e^{(N-1)(z-c)} \\
 &\propto \psi(x^{*}-z)\, u(c-z,N)\, e^{(N-1)(x^{*}-c)} \\
 &\propto \psi(x^{*}-z)\, u((x^{*}-z)-(x^{*}-c),N)\\
 &\propto {Q(x^{*}-z|x^{*}-c)}
 \end{split}
\end{equation}
as required.

\section{Analytical approximations for the House of Cards model}
\label{sec:analytical}

It is often useful to develop analytical approximations for quantities of evolutionary interest. Here we present a system for developing analytical approximations under the 
House of Cards model. This system has the surprising feature that the terms in the analytical approximations are very closely related to the mutational and equilibrium 
distributions of the underlying House of Cards model.

Our main strategy is to extend the standard approximation for the probability of fixation, $u(s,N)\approx s$, to accommodate deleterious fixations:
\begin{equation}
\label{eq:currentapprox}
u(s,N)\approx \left\{\begin{aligned}  &-s\,e^{(N-1)s} && \text{for } s \leq 0\\
& s && \text{for } s>0 
\end{aligned} \right. 
\end{equation}
\citep{McCandlish13c}. This approximation has the interesting property that when used in place of the true probability of fixation to define the transition rates in Equation~
\ref{eq:rates}, the resulting Markov process has an equilibrium distribution identical to the equilibrium distribution obtained using the true probability of fixation (Equation~
\ref{eq:eq}).

Using this approximation, we can then analyze the dynamics as follows. For a population fixed for an allele with fitness $x$, the rate of beneficial substitutions, $k_{b}(x)$, 
is
\begin{equation}
k_{b}(x)=\int_{x}^{\infty} Q(y|x)\,dy \approx N \int_{x}^{\infty}(y-x)\psi(y)\,dy
\end{equation}
while the rate of deleterious fixations is
\begin{equation}
\begin{split}
k_{d}(x) &=\int_{-\infty}^{x} Q(y|x)\,dy \\
& \approx -N \int_{-\infty}^{x}(y-x)\,e^{(N-1)(y-x)}\,\psi(y)\,dy \\
& \approx - N\,Z\,e^{-(N-1) x}\int_{-\infty}^{x}(y-x)\, \pi(y)\, dy
\end{split}
\end{equation}
Similarly, the expected instantaneous rate of change in fitness due to beneficial fixations is
\begin{equation}
\begin{split}
r_{b}(x) & =\int_{x}^{\infty} (y-x)\,Q(y|x)\,dy \\
 & \approx N \int_{x}^{\infty}(y-x)^{2}\,\psi(y)\,dy
\end{split}
\end{equation} 
while the expected instantaneous rate of change in fitness due to deleterious fixations is
\begin{equation}
\begin{split}
r_{d}(x) &=\int_{-\infty}^{x} (y-x)\,Q(y|x)\,dy \\
& \approx -N \int_{-\infty}^{x}(y-x)^{2}\,e^{(N-1)(y-x)}\,\psi(y)\,dy \\
& \approx - N\,Z\,e^{-(N-1) x}\int_{-\infty}^{x}(y-x)^{2}\, \pi(y)\, dy.
\end{split}
\end{equation}
To put this another way, approximating the substitution rate and the expected change in fitness for a population fixed for fitness $x$ comes down to calculating the first few 
moments of truncated versions of $\psi$ and $\pi$. For many common distributions, these moments are easy to compute or can be looked up in a table~\citep[e.g., Table 
1 of][]{Jawitz04}.

With these few quantities in hand, we can calculate a variety of other quantities of interest. Of course, the total substitution rate is 
\begin{equation}
k(x)=k_{b}(x)+k_{d}(x),
\end{equation}
and the probability that the next substitution is advantageous is
\begin{equation}
p(x)=\frac{k_{b}(x)}{k(x)}.
\end{equation}
Similarly, the expected instantaneous rate of change in fitness is
\begin{equation}
r(x)=r_{b}(x)+r_{d}(x)
\end{equation}
and the expected selection coefficient of the first substitution is
\begin{equation}
m(x)=r(x)/k(x).
\end{equation}
One can also compute the expected selection coefficient of the first substitution conditional on it being advantageous, $r_{b}(x)/k_{b}(x)$, or deleterious, $r_{d}(x)/k_{d}(x)$.

In the case when $\psi$ is normal with mean $\mu$ and variance $\sigma^{2}$, $\pi$ is normal with mean $\mu'=\mu+(N-1)\sigma^{2}$ and the same variance. Let us write 
the cumulative distribution function of $\psi$ and $\pi$ as $\Psi(x)=\int_{-\infty}^{x}\psi(y)\,dy$ and $\Pi(x)=\int_{-\infty}^{x}\pi(y)\,dy$, respectively. Looking up the 
necessary moments~\citep{Jawitz04} and simplifying leads to the following expressions:
\begin{align}
k_{b} &=N\,\psi(x)\left(\sigma^{2}+\left(\mu-x\right)\frac{1-\Psi(x)}{\psi(x)}\right)\\
k_{d} &=N\,\psi(x)\left(\sigma^{2}-\left(\mu'-x\right)\frac{\Pi(x)}{\pi(x)}\right)\\
r_{b} &= N\,\psi(x)\left(\left(\mu-x\right)\sigma^{2}+\left(\mu-x\right)^{2}\frac{1-\Psi(x)}{\psi(x)}\right)\\
r_{d} &= N\,\psi(x)\left(\left(\mu'-x\right)\sigma^{2}-\left(\mu'-x\right)^{2}\frac{\Pi(x)}{\pi(x)}\right).
\end{align}
It is interesting to ask why the above expressions are so symmetric. This is in large part a consequence of the fact that for our approximation of $u(s,N)$, it is still the 
case that $u(s,N)=u(-s,N)e^{(N-1)s}$ and thus the symmetry argument given in~\ref{sec:GuassianProofs} for the distribution of fitnesses after one substitution still holds 
under the approximation. As a consequence, the approximation is exactly correct for $m(x^{*})$ and $k(x^{*})$, where $x^{*}=(\mu+\mu')/2$.

The above equations were used to plot the curves in Figure~\ref{fig:HOCstats}. Note that the approximation, while
reasonably good, is always more extreme than the numerical results, in the sense
that, e.g., the analytical result for $m(x)$ is too high when $m(x)$ is positive
and too low when $m(x)$ is negative. This is to be expected, since our
approximation underestimates the contribution of nearly neutral fixations and therefore overestimates the average magnitude of the mutations that fix.

\section{Results for Fisher's Geometric Model}
\label{sec:FGMproofs}

\subsection{Scaling of the model}
\label{sec:FGMscaling}

In the main text, we note that evolution under Fisher's geometric model is particularly easy to understand when fitnesses are measured in scaled fitness because then, to a 
very good approximation, the dynamics depend only on the dimensionality $n$ and the compound parameter $N\lambda$. To see why this is the case, recall that the 
fitness of a mutant introduced by mutation when the current fitness is $x$ is $-\lambda/2$ times a non-central chi-squared distributed random variable with $n$ degrees of 
freedom and non-centrality parameter $-2x/\lambda$. Thus, if $X=Nx$ is the current scaled fitness, the scaled fitness of a new mutant is $-\frac{N\lambda}{2}$ times a 
non-central chi-squared random variable with $n$ degrees of freedom and non-centrality parameter $-2x/\lambda=-\frac{2}{N \lambda} X$. Thus, the mutational process, 
when viewed in terms of scaled fitness, depends only on $n$ and the product $N\lambda$.

Furthermore, it is well-known that the evolutionary dynamics under weak mutation depends primarily on scaled fitness, and not on fitness and population size separately, 
so long as selection coefficients remain small. In particular, if we approximate the probability of fixation as
\begin{equation}
u(s,N)=\frac{1-e^{-s}}{1-e^{-Ns}}\approx \frac{s}{1-e^{-Ns}},
\end{equation}
then the instantaneous transition rate from scaled fitness $X$ to scaled fitness $Y$ is given by:
\begin{equation}
\label{eq:scaledrate}
Q(Y|X)\approx\Phi_{X}(Y) \frac{Y-X}{1-e^{-(Y-X)}},
\end{equation}
where $\Phi_{X}(Y)$ is the probability density function describing the scaled fitnesses of new mutations for a population whose current scaled fitness is $X$~\citep{Kryazhimskiy09}. Since this probability density function depends only on $n$, the 
compound parameter $N\lambda$ and the current fitness $X$, it is clear that the dynamics of evolution under FGM when viewed in terms of scaled fitnesses depend only 
on $n$ and $N\lambda$ provided that the absolute size of the selection coefficients is sufficiently small.

Finally, using the above approximation for the transition rate from scaled fitness $X$ to scaled fitness $Y$ yields a simple form for the equilibrium distribution. In particular 
the equilibrium distribution in terms of scaled fitnesses is $-1$ times a gamma distribution with shape $n/2$ and scale 1. Note that this equilibrium distribution is independent of $N
\lambda$ and that its mean is simply $-n/2$.

\subsection{The large $N\lambda$ limit}
\label{sec:largelim}
Having shown that the dynamics of evolution under FGM, to a good approximation, depend only on $n$ and $N\lambda$ when measured in terms of scaled fitnesses, it is 
natural to ask what happens in various limiting cases of these two parameters. Here we will discuss the dynamics as $N\lambda$ becomes large. In particular, we will take 
an approach based on approximating the distribution of scaled fitnesses introduced by mutation, $\Phi_{X}$, under the assumption that terms of order $1/(N\lambda)$ are 
negligible.

In order to investigate the large $N\lambda$ limit, it is helpful to write out the PDF of $\Phi_{X}(Y)$ explicitly. Recalling that $X,Y\leq0$, we have:
\begin{equation}
\Phi_{X}(Y)=\frac{1}{N\lambda}e^{\frac{Y}{N\lambda}+\frac{X}{N\lambda}} \left(\frac{Y}{X}\right)^{\frac{1}{2}(n/2-1)}I_{n/2-1}\left(\sqrt{\left(2\frac{Y}{N\lambda}\right)
\left(2\frac{X}{N\lambda}\right)}\right)
\end{equation}
where $I_v(x)$ is a modified Bessel function of the first kind:
\begin{equation}
I_v(x)=\sum_{k=0}^{\infty} \frac{1}{k!\, \Gamma(k+v+1)}\left(\frac{x}{2}\right)^{2k+v}.
\end{equation} 
Following~\citet{Martin08} in using the method of~\citet{Jaschke04}, we then note that
\begin{gather}
e^{\frac{Y}{N\lambda} } = 1+\frac{Y}{N\lambda}+O\left(\left(\frac{1}{N\lambda}\right)^{2}\right) \\
I_{n/2-1}\left(\sqrt{\left(2\frac{Y}{N\lambda}\right)\left(2\frac{X}{N\lambda}\right)}\right)=\left(\frac{1}{N\lambda}\right)^{n/2-1}\left(YX\right)^{(1/2)(n/2-1)}\left(\frac{1}
{\Gamma(n/2)}+O\left(\left(\frac{1}{N\lambda}\right)^{2}\right)\right)
\end{gather}
so that
\begin{equation}
\Phi_{X}(Y)=e^{X/({N\lambda})}\frac{(-Y)^{n/2-1}}{\Gamma(n/2) (N\lambda)^{n/2}}\left(1+O(\frac{1}{N\lambda})\right).
\end{equation}
We can thus make the approximation
\begin{equation}
\Phi_{X}(Y)\approx e^{X/({N\lambda})}\frac{(-Y)^{n/2-1}}{\Gamma(n/2) (N\lambda)^{n/2}}
\end{equation}
for large $N\lambda$. Note that this is identical to the tail approximation for $x$ close to 0 used by~\citet{Martin08} when translated into scaled fitness and adapted to the 
case of an isotropic FGM as analyzed here. This equivalence makes sense if we consider what happens if we make $N\lambda$ large by increasing $N$ while keeping $
\lambda$ and $X$ fixed, since then $x=X/N$ approaches 0 just as in the approximation developed by~\citet{Martin08}.

\subsection{Analytical approximations in the large $N\lambda$ limit}
\label{sec:FGManalytic}

Given the approximation to the evolutionary dynamics for large $N\lambda$ developed in the previous section (\ref{sec:largelim}), one can develop analytical results for the 
probability that the next substitution is advantageous and the expected scaled selection coefficient of the next substitution by using a further approximation for the 
probability of fixation. In particular, we will approximate the term:
\begin{equation}
\frac{Y-X}{1-e^{-(Y-X)}}
\end{equation}
in Equation~\ref{eq:scaledrate} as
\begin{equation}
\label{eq:currentapprox2}
\frac{Y-X}{1-e^{-Y-X}}\approx \left\{\begin{aligned}  &-(Y-X)\,e^{Y-X} && \text{for } Y-X \leq 0\\
& Y-X && \text{for } Y-X>0.
\end{aligned} \right. 
\end{equation}
This approximation is very similar to the approximation used in \ref{sec:analytical} in that its use results in the same equilibrium distribution as the Markov process given 
by Equation~\ref{eq:scaledrate}. Importantly, this approximation can similarly be viewed as a modification of the standard 
approximation used in the literature on the genetics of adaptation (where the rate of advantageous substitutions is proportional to the selection coefficient) to accommodate the possibility of deleterious substitutions.

To begin, let us recall that for $b>-1$ and $a\leq0$
\begin{equation}
\label{eq:int1}
\int_{a}^{0}(-z)^{b}\,dz=\frac{(-a)^{b+1}}{b+1}
\end{equation}
 and
\begin{equation}
\label{eq:int2}
\int_{-\infty}^{a}e^{z}\,(-z)^{b}\,dz=\Gamma(b+1,-a),
\end{equation}¥
where $\Gamma(b+1,-a)$ is an upper incomplete gamma function.

Using the first of the above integrals (Equation~\ref{eq:int1}), we then have that the rate of advantageous substitution is
\begin{equation}
\label{eq:FGMkb}
\begin{split}
k_{b}(X) &\approx \frac{ e^{X/({N\lambda})}}{\Gamma(n/2) (N\lambda)^{n/2}} \int_{X}^{0}(Y-X)(-Y)^{n/2-1}\,dy\\
 &\approx \frac{ e^{X/({N\lambda})}}{\Gamma(n/2) (N\lambda)^{n/2}} \frac{(-X)^{n/2+1}}{(n/2)(n/2+1)}.
\end{split}
\end{equation}
Using the second of the above integrals (Equation~\ref{eq:int2}), we also have that the rate of deleterious substitution is
\begin{equation}
\label{eq:FGMkd}
\begin{split}
k_{d}(X) & \approx -\frac{ e^{X/({N\lambda})}}{\Gamma(n/2) (N\lambda)^{n/2}} \int_{-\infty}^{X}(Y-X)e^{(Y-X)}(-Y)^{n/2-1}\,dy\\
 &\approx \frac{ e^{X/({N\lambda})}}{\Gamma(n/2) (N\lambda)^{n/2}}\, e^{-X}\left( \Gamma(n/2+1,-X)+X\, \Gamma(n/2,-X)\right) \\
  &\approx \frac{ e^{X/({N\lambda})}}{\Gamma(n/2) (N\lambda)^{n/2}}\, \left((-X)^{n/2}+e^{-X}\,(n/2+X)\,\Gamma(n/2,-X)\right)
\end{split}
\end{equation}
where we have used the recurrence relation $\Gamma(\alpha+1,\beta)=\alpha\, \Gamma(\alpha,\beta)+\beta^{\alpha}\,e^{-\beta}$ to derive the last line. Thus, we have
\begin{equation}
\label{eq:FGMp}
\begin{split}
p(X)&=\frac{k_{b}(X)}{k_{b}(X)+k_{d}(X)}\\
&\approx \frac{-X}{-X+(n/2)(n/2+1)\left(1+e^{-X}\,(n/2+X)\,(-X)^{-(n/2)}\,\Gamma(n/2,-X)\right)}.
\end{split}
\end{equation}
Notice that the probability in the last line is not a function of $N\lambda$. Thus, while the absolute substitution rate at any fixed $X$ decreases for large 
$N\lambda$ as $N\lambda$ increases (see the factor $e^{X/(N\lambda)}/(N\lambda)^{n/2}$ in Equations \ref{eq:FGMkb}~and~\ref{eq:FGMkd}), the fraction of beneficial 
substitutions approaches a definite limit as $N\lambda\rightarrow \infty$.

Now, if we approximate the median fitness at equilibrium by the mean fitness at equilibrium, Equation~\ref{eq:FGMp} shows that the probability that the next substitution is 
advantageous at the equilibrium median fitness is simply
\begin{equation}
p(-n/2)\approx\frac{1}{n/2+2},
\end{equation}
as stated in the main text (as a formal matter, the simplification arises because the term $(n/2+X)$  becomes $0$ if $X=-n/2$). Critically, this expression is always less 
than $1/2$ for $n>0$.

The expected selection coefficient of the first substitution, conditional on it being advantageous, can also be found using Equation~\ref{eq:int1}:
\begin{equation}
\begin{split}
m_{b}(X)&\approx \frac{\int_{X}^{0}Y(Y-X)(-Y)^{n/2-1}\,dy}{\int_{X}^{0}(Y-X)(-Y)^{n/2-1}\,dy}-X \\
 &\approx -X\frac{2}{n/2+2},
\end{split}¥
\end{equation}¥
which is equivalent to the expression given by~\citet{Martin08} when the expression given by~\citet{Martin08} is translated into scaled fitness. At the equilibrium mean 
fitness, this gives us
\begin{equation}
m_{b}(-n/2)\approx \frac{n}{n/2+2}.
\end{equation}

Similarly, the expected selection coefficient of the first substitution, conditional on it being deleterious, can be found using Equation~\ref{eq:int2} and the recursive formula 
for the incomplete gamma function:
\begin{equation}
\begin{split}
m_{d}(X)&\approx \frac{\int_{-\infty}^{X}Y(Y-X)e^{(Y-X)}(-Y)^{n/2-1}\,dy}{\int_{-\infty}^{X}(Y-X)e^{(Y-X)}(-Y)^{n/2-1}\,dy}-X \\
 &\approx \frac{X\,\Gamma(n/2+1,-X)+\Gamma(n/2+2,-X)}{X\,\Gamma(n/2,-X)+\Gamma(n/2+1,-X)}-X \\
 &\approx-\frac{n}{2} \left(1+\frac{\Gamma(n/2+1,-X)}{(n/2+X)\Gamma(n/2+1,-X)+(-X)^{n/2+1}e^{-X} } \right)-X.
\end{split}¥
\end{equation}¥
At the equilibrium mean fitness, this expression simplifies further to
\begin{equation}
m_{d}(-n/2) \approx-\left(\frac{e}{n/2}\right)^{n/2}\Gamma(n/2+1,n/2).
\end{equation}

We can now use these results to characterize the expected selection coefficient of the first substitution for a population whose initial fitness is equal to the equilibrium 
mean fitness. The expected selection coefficient is given by:
\begin{equation}
\begin{split}
m(-n/2) &\approx p(-n/2)\,m_{b}(-n/2)+\left(1-p(-n/2)\right)\,m_{d}(-n/2)\\
 &\approx \frac{n}{(n/2+2)^{2}}-\left(\frac{e}{n/2}\right)^{n/2}\Gamma(n/2+1,n/2)\,\frac{n/2+1}{n/2+2},
\end{split}¥
\end{equation}¥
which is solely a function of $n$. Recalling the well-known inequality $\Gamma(\alpha,\beta)\geq e^{-\beta}\,\beta^{\alpha-1}$ for $\alpha\geq1$ \citep[see, e.g.,][Section 8.10]
{Olver10}, we have
\begin{equation}
\begin{split}
\frac{n}{(n/2+2)^{2}}-\left(\frac{e}{n/2}\right)^{n/2}\Gamma(n/2+1,n/2)\,\frac{n/2+1}{n/2+2} &\leq \frac{n}{(n/2+2)^{2}}-\frac{n/2+1}{n/2+2}\\
 &\leq -\frac{(n/2)^{2}+n/2+2}{n/2+2}\\
 &<0
\end{split}¥
\end{equation}¥
for $n\geq1$, so that our approximation of $m(n/2)$ is always negative.

\renewcommand{\refname}{Literature Cited in Supporting Information}
\putbib[MainBibtexDatabase]
\end{bibunit}

\end{document}